\renewcommand{\bXLinkxy}[3][]{
 \draw [bXLineStyle] (#2) -|
 node[name=#2-#3, above, near start] {#1} (#3) ;
 }
\renewcommand{\bXLinkyx}[3][]{
        \draw [bXLineStyle] (#2.south) |- node[name=#2-#3, above, near end] {#1} (#3) ;
}
\title{\LARGE \bf
Scaled relative graphs for system analysis
}
\author{Thomas Chaffey$^{1}$ \and Fulvio Forni$^{1}$ \and Rodolphe Sepulchre$^{1}$%
\thanks{The research leading to these results has received funding from the European
Research Council under the Advanced ERC Grant Agreement Switchlet n. 670645.}%
\thanks{$^{1}$The authors are with the University of Cambridge, Department of Engineering, Trumpington Street,
        Cambridge CB2 1PZ, {\tt\small tlc37@cam.ac.uk}, {\tt\small
f.forni@eng.cam.ac.uk},  {\tt\small r.sepulchre@eng.cam.ac.uk}.}
}
\begin{document}

\maketitle
\thispagestyle{empty}
\pagestyle{empty}

\begin{abstract}

        Scaled relative graphs were recently introduced to analyze the convergence of optimization algorithms
        using two dimensional Euclidean geometry.  In this paper, we connect scaled
        relative graphs to the classical theory of input/output systems. It is shown that the
        Nyquist diagram of an LTI system on $L_2$ is the convex hull of its scaled relative
        graph under a particular change of coordinates.  The SRG may be used to
        visualize approximations of static nonlinearities such as the describing
        function and quadratic constraints, allowing system properties to be verified
        or disproved.  Interconnections of systems correspond to graphical
        manipulations of their SRGs.  This is used to provide a simple, graphical
        proof of the classical incremental passivity theorem.
\end{abstract}

\section{Introduction}

The Scaled Relative Graph (SRG) of \textcite{Ryu2021} allows the action of a
nonlinear operator to be visualized on the complex plane.
Incremental properties of an operator, measured between pairs of
inputs and outputs, such as maximal monotonicity and Lipschitz continuity, may
be verified by checking geometric conditions on the SRG of the operator. 
Algebraic manipulations to the operator correspond to geometric manipulations to the
SRG.  This tool allows simple, intuitive and rigorous proofs of the convergence of
many algorithms in convex optimization.  Furthermore, the graphical method is
particularly suitable for proving tightness of convergence bounds, with several novel
tightness results being proved \autocite{Ryu2021, Huang2020}.

In this paper, we connect the scaled relative graph to the classical theory of
input/output systems.  The SRG of a linear, time invariant operator is
the convex hull of its Nyquist diagram under a particular change of coordinates.  The
Nyquist diagram is a cornerstone of linear system theory, and has given rise to many
fundamental developments in the field, among them the Nyquist stability criterion
\autocite{Nyquist1932},
the definition of stability margins and system robustness, the gap metric
\autocite{El-Sakkary1985a} and
graphical interpretation of $H_\infty$ control \autocite{Vinnicombe2000}.  The Nyquist diagram also has a
fundamental place in the theory of nonlinear systems of the Lur'e form (that is,
systems composed of an LTI forward path in feedback with a static nonlinearity).  The
circle and Popov
criteria allow the stability of a Lur'e system to be proved by verifying a
geometric condition on the Nyquist diagram of the LTI component \autocite{Desoer1975}.  The geometric
condition is determined by the properties of the static nonlinearity.   Notably, only
the Nyquist diagram of the LTI component is well-defined, owing to a lack of a
suitable definition of phase for nonlinear systems.  This clearly hampers the use of
the Nyquist diagram in nonlinear system theory.  A notable, classical extension of
the Nyquist diagram to nonlinear systems is the describing function
\autocite{Krylov1947, Blagquiere1966, Slotine1991}.  This
approximate method produces a family of Nyquist curves for a nonlinearity,
parameterized by the amplitude of the input.  
Other efforts to generalize frequency response to nonlinear systems include the work
of \textcite{Pavlov2006} on Bode diagrams for convergent systems, and the recently introduced
notion of nonlinear phase by \textcite{Chen2020}.

Whilst the SRG generalizes the Nyquist diagram of an LTI operator, it may be plotted
for any nonlinear operator, allowing the use of graphical techniques for the analysis
of arbitrary interconnected systems.  We anticipate that, in the same way as the SRG
allows simple, graphical proofs of many results in optimization, the SRG will allow
simple, graphical proofs of the incremental versions of many classical results in 
nonlinear systems and control.  Furthermore, it will allow the popular graphical
control design techniques for LTI systems to be extended to nonlinear systems.

In this preliminary work, we make the first steps towards these aims. We begin in
Section~\ref{sec:def} by defining the SRG over a Hilbert space, and showing how
several important system properties can be determined from the SRG.  
 We then characterize the SRG of an LTI operator on $L_2$ in Section~\ref{sec:LTI}.
 In Section~\ref{sec:NL}, we 
examine the SRGs of static nonlinearities, and show that
standard approximations of static nonlinearities, such as the describing function and
quadratic constraints, have readily computed SRGs which under- and
over-approximate the SRG of the true system.  Finally, in
Section~\ref{sec:interconnection} we demonstrate the usefulness of the SRG in the
analysis of interconnected systems with a simple, graphical proof of the incremental
passivity theorem.

\section{Scaled relative graphs}\label{sec:def}

We define SRGs in the same way as \textcite{Ryu2021}, with the minor modification of
allowing complex valued inner products. Let $\mathcal{H}$ be a Hilbert space,
equipped with an inner product,
$\bra{\cdot}\ket{\cdot}: \mathcal{H} \times \mathcal{H} \to \mathbb{C}$, and the
induced norm $\norm{x} \coloneqq \sqrt{\bra{x}\ket{x}}$.

The angle between $x, y \in \mathcal{H}$ is defined as
\begin{IEEEeqnarray*}{rCl}
        \angle(x, y) \coloneqq \acos \frac{\Re \bra{x}\ket{y}}{\norm{x}\norm{y}}. 
\end{IEEEeqnarray*}

Let $R: \mathcal{H} \to \mathcal{H}$ be an operator, or \emph{system}.  Given $u_1, u_2 \in
\mathcal{U} \subseteq \mathcal{H}$, define the complex number $z_R(u_1, u_2)$ by
\begin{IEEEeqnarray*}{rCl}
        z_R(u_1, u_2) &\coloneqq& \frac{\norm{Ru_1 - Ru_2}}{\norm{u_1 - u_2}}e^{i\angle(u_1 -
        u_2, Ru_1 - Ru_2)}.
\end{IEEEeqnarray*}

The \emph{Scaled Relative Graph} (SRG) of $R$ over $\mathcal{U} \subseteq \mathcal{H}$ is then given by
\begin{IEEEeqnarray*}{rCl}
        \srg[\mathcal{U}]{R} \coloneqq \left\{ z_R(u_1, u_2) \middle| u_1, u_2 \in
        \dom{R}, u_1 \neq u_2 \right\}.
\end{IEEEeqnarray*}
If $\mathcal{U} = \mathcal{H}$, we write $\srg{R} \coloneqq \srg[\mathcal{H}]{R}$.

If $R$ is linear and $\dom{R}$ is a linear subspace of $\mathcal{H}$, $Ru_1 - Ru_2 =
R(u_1 - u_2) = R(v)$ for some $v \in \dom{R}$,
and we can define
\begin{IEEEeqnarray*}{rCl}
        z_R(v) \coloneqq \frac{\norm{Rv}}{\norm{v}}e^{i\angle(v, Rv)}
\end{IEEEeqnarray*}
and
\begin{IEEEeqnarray*}{rCl}
        \srg[\dom{R}]{R} \coloneqq \left\{ z_R(v) \middle| v \in
        \dom{R}, v \neq 0 \right\}.
\end{IEEEeqnarray*}

Each point on the SRG of an operator shows the gain and phase shift
        of a particular pair of inputs.  The SRG thus allows the gain and phase shift
        of nonlinear operators to be visualized graphically, similar to the Nyquist
diagram of an LTI transfer function.

If $\mathcal{A}$ is a class of operators, we define the SRG of $\mathcal{A}$ by
\begin{IEEEeqnarray*}{rCl}
        \srg{\mathcal{A}} \coloneqq \bigcup_{R \in \mathcal{A}} \srg{R}.
\end{IEEEeqnarray*}

A class $\mathcal{A}$, or its SRG, is called \emph{SRG-full} if
\begin{IEEEeqnarray*}{rCl}
        R \in \mathcal{A}\quad \iff \quad \srg{R} \subseteq \srg{\mathcal{A}}.
\end{IEEEeqnarray*}
By construction, the implication $R \in \mathcal{A} \implies \srg{R} \subseteq
\srg{\mathcal{A}}$ is true.  The value of SRG-fullness is in the reverse implication: 
$\srg{R} \subseteq \srg{\mathcal{A}} \implies R \in \mathcal{A}$.  This allows class
membership to be tested graphically.  If $\mathcal{A}$ is the class of systems with a
particular system property, SRG-fullness of $\mathcal{A}$ allows this property to be
verified for a particular operator $R$ by plotting its SRG.  If $\srg{R} \subseteq
\srg{\mathcal{A}}$, $R$ has the property.

We demonstrate this approach
using two classical system properties: $L_2$ gain and incremental positivity (or
monotonicity).  Let $L_2(\mathbb{F})$ be the space of square-integrable signals with values in the
field $\mathbb{F} \in \{\R, \C\}$.  We write $L_2$ when the choice of field is
immaterial.  Define the usual inner product and norm by
\begin{IEEEeqnarray*}{rCl}
        \bra{u}\ket{y} \coloneqq \int_{-\infty}^{\infty} u(t) \bar{y} (t) \dd{t},\\
        \norm{u} \coloneqq \sqrt{\bra{u}\ket{u}},
\end{IEEEeqnarray*}
where $\bar{y}(t)$ denotes the complex conjugate of $y(t)$.

Let $R: L_2 \to L_2$.  $R$ is said to have an \emph{incremental $L_2$ norm bound of
$\gamma$} \autocite{vanderSchaft2017a} if
\begin{IEEEeqnarray*}{rCl}
        \norm{Ru_1 - Ru_2}\leq \gamma\norm{u_1 - u_2}
\end{IEEEeqnarray*}
for all $u_1, u_2 \in L_2$.
$R$ is said to be \emph{incrementally positive}, or \emph{monotone on $L_2$}, if
\begin{IEEEeqnarray*}{rCl}
        \bra{u_1 - u_2}\ket{Ru_1 - Ru_2} \geq 0
\end{IEEEeqnarray*}
for all $u_1, u_2 \in L_2$.  Note that incremental positivity here is meant in the operator
theoretic sense of \autocite[$\sec$ 4 p. 173]{Desoer1975}.  It is closely related to
incremental passivity - indeed, if $R$ is causal, the two are equivalent (the proof
is identical to that of \autocite[Lem. 2, p. 200]{Desoer1975}).  Furthermore, if a
system is linear and time-invariant, incremental passivity is equivalent to
passivity.

The following two propositions demonstrate the verification of system properties from
the system's SRG, and follow directly from \autocite[Prop. 3.3 \& Thm. 3.5]{Ryu2021}.
Both incremental $L_2$ gain and incremental positivity define SRG-full classes.

\begin{proposition}\label{prop:L2}
        An operator $R: L_2 \to L_2$ has an incremental $L_2$ gain less than $\gamma$
        if an only if its SRG lies within the circle centred at the origin of radius
        $\gamma$.
\end{proposition}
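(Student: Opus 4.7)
The plan is to exploit the fact that the definition of $z_R(u_1, u_2)$ cleanly separates a modulus and a phase. Since $|e^{i\angle(u_1-u_2,\, Ru_1-Ru_2)}| = 1$, one has
\begin{IEEEeqnarray*}{rCl}
|z_R(u_1, u_2)| &=& \frac{\norm{Ru_1 - Ru_2}}{\norm{u_1 - u_2}}
\end{IEEEeqnarray*}
for every $u_1 \neq u_2$. The statement therefore reduces to a constraint on the moduli of SRG points, with the angle playing no role.

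For the forward direction, I would assume the incremental $L_2$ gain bound and note that every $z \in \srg{R}$ then satisfies $|z| \leq \gamma$, so $\srg{R}$ lies in the closed disk of radius $\gamma$ about the origin. For the converse, I would simply run the same argument in reverse: if every point of $\srg{R}$ satisfies $|z| \leq \gamma$, then $\norm{Ru_1 - Ru_2} \leq \gamma \norm{u_1 - u_2}$ for all $u_1 \neq u_2$, and the case $u_1 = u_2$ is trivial, so $R$ has an incremental $L_2$ gain bound of $\gamma$.

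Because both directions are mere rearrangements of the same inequality, I do not anticipate any substantive obstacle. The only point that requires care is to read ``lies within the circle of radius $\gamma$'' as the closed disk, so that it matches the non-strict inequality in the standard definition of the incremental $L_2$ norm bound. Equivalently, the proposition asserts that the class of operators with incremental $L_2$ gain at most $\gamma$ is SRG-full, which is the $L_2$ instantiation of \autocite[Prop. 3.3]{Ryu2021}.
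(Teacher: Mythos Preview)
Your proposal is correct and matches the paper's approach: the paper does not give a standalone proof but simply notes that the proposition follows directly from \autocite[Prop.~3.3 \& Thm.~3.5]{Ryu2021}, which is precisely the SRG-fullness statement you invoke at the end. Your explicit modulus argument is the natural unpacking of that citation and contains no gaps.
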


This property is reminiscent of the property that the $L_2$ gain of an LTI transfer
function is the maximum magnitude of its frequency response.

\begin{proposition}\label{prop:inc_passive}
        An operator $R: L_2 \to L_2$ is incrementally positive if and only if its SRG
        lies in the right half plane, $\C_{\Re \geq 0}$.
\end{proposition}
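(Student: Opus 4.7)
The plan is to reduce the statement to a direct algebraic identity: that $\Re z_R(u_1, u_2)$ is a positive multiple of $\Re \bra{u_1 - u_2}\ket{Ru_1 - Ru_2}$. Once this identity is in hand, both directions of the equivalence fall out immediately, and the proof becomes a matter of unpacking the polar representation of $z_R$.

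Concretely, I would start from $z_R(u_1,u_2) = \frac{\norm{Ru_1 - Ru_2}}{\norm{u_1 - u_2}} e^{i\theta}$ with $\theta = \angle(u_1-u_2, Ru_1-Ru_2) \in [0,\pi]$, using that $\acos$ takes values in $[0,\pi]$. Hence $\Re e^{i\theta} = \cos\theta$, and the very definition of the angle yields $\cos\theta = \Re\bra{u_1-u_2}\ket{Ru_1-Ru_2}/(\norm{u_1-u_2}\norm{Ru_1-Ru_2})$ whenever $Ru_1 \neq Ru_2$. Multiplying, the factor $\norm{Ru_1-Ru_2}$ cancels, leaving
\begin{equation*}
\Re z_R(u_1, u_2) = \frac{\Re \bra{u_1 - u_2}\ket{Ru_1 - Ru_2}}{\norm{u_1 - u_2}^2}.
\end{equation*}
The edge case $Ru_1 = Ru_2$ makes both sides zero and causes no trouble. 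Since the denominator is strictly positive for $u_1 \neq u_2$, the sign of $\Re z_R(u_1,u_2)$ agrees with the sign of $\Re \bra{u_1 - u_2}\ket{Ru_1 - Ru_2}$ throughout.

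The two implications now read off immediately. If $R$ is incrementally positive, the numerator above is non-negative for every admissible pair, so every point of $\srg{R}$ lies in $\C_{\Re \geq 0}$. Conversely, if $\srg{R} \subseteq \C_{\Re \geq 0}$, then $\Re \bra{u_1 - u_2}\ket{Ru_1 - Ru_2} \geq 0$ for all $u_1 \neq u_2$, which is precisely incremental positivity (with the complex-valued inner product read in terms of its real part, as the inequality demands). There is no genuine obstacle here: the displayed identity is the entire substance of the argument, and the proposition is essentially a rewriting of incremental positivity in polar form. This is also why the authors can cite it directly as a specialization of \textcite[Thm.~3.5]{Ryu2021}.
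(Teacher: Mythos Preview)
Your argument is correct: the identity $\Re z_R(u_1,u_2) = \Re\bra{u_1-u_2}\ket{Ru_1-Ru_2}/\norm{u_1-u_2}^2$ follows exactly as you compute, the edge case $Ru_1=Ru_2$ is handled, and both implications drop out. The paper itself does not give a proof of this proposition --- it simply states that it ``follows directly from \autocite[Prop.~3.3 \& Thm.~3.5]{Ryu2021}'' --- so your self-contained computation is essentially the elementary content behind that citation, spelled out rather than deferred.
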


This property is reminiscent of the positive realness of a transfer function.

The properties of bounded incremental $L_2$ gain and incremental positivity are particular
examples of incremental Integral Quadratic Constraints (IQCs)
\autocite{Megretski1997}.
A striking corollary of \textcite[Thm. 3.5]{Ryu2021} is that any SRG
defined by a frequency-independent incremental IQC is SRG-full.

\begin{proposition}
        Let $u_i(t)$ denote the input to an arbitrary operator on $L_2$, and $y_i(t)$
        denote the corresponding output.  Let $\Delta u = u_1 - u_2$ and $\Delta y =
        y_1 - y_2$, and $\hat{x}(\omega)$ denote the Fourier transform of signal
        $x(t)$.  Then the classes of operators which obey either of the constraints
        \begin{IEEEeqnarray}{rCl}
                \int^\infty_{-\infty}\begin{pmatrix} \Delta\hat{u}(\omega)\\ \Delta
                \hat{y}(\omega) \end{pmatrix}\tran
                \begin{pmatrix} a & b\\ c & d \end{pmatrix} \begin{pmatrix} \Delta
        \hat{u}(\omega)\\ \Delta \hat{y}(\omega) \end{pmatrix} \dd{\omega} &\geq& 0,\label{eq:quad_1}\\
        \int^\infty_{-\infty} \begin{pmatrix} \Delta u(t)\\ \Delta y(t) \end{pmatrix}\tran
        \begin{pmatrix} a & b\\ c & d \end{pmatrix} \begin{pmatrix} \Delta u(t)\\
\Delta y(t) \end{pmatrix}\dd{t} &\geq& 0,\label{eq:quad_2}
        \end{IEEEeqnarray}
        where $a, b, c, d \in \R$, are SRG-full. 
\end{proposition}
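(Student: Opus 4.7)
My plan is to reduce the quadratic IQCs \eqref{eq:quad_1} and \eqref{eq:quad_2} to a single quadratic inequality in the complex plane that constrains the SRG point $z_R(u_1, u_2)$, and then invoke \autocite[Thm. 3.5]{Ryu2021}, which already establishes SRG-fullness for the disk, disk-complement and half-plane shapes that arise this way.

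First I would rewrite \eqref{eq:quad_2} by expanding the quadratic form: with $\Delta u = u_1 - u_2$ and $\Delta y = y_1 - y_2 = Ru_1 - Ru_2$, the integral collapses to the Hilbert-space inequality
\begin{IEEEeqnarray*}{rCl}
a\norm{\Delta u}^2 + (b+c)\,\Re\bra{\Delta u}\ket{\Delta y} + d\norm{\Delta y}^2 \geq 0.
\end{IEEEeqnarray*}
For \eqref{eq:quad_1}, since the weight matrix is frequency-independent it can be pulled outside the integral; Parseval's identity then converts each of the three resulting terms into the corresponding $L_2$ norm or real inner product of $\Delta u$ and $\Delta y$, yielding the same inequality up to a harmless overall constant. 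This step is where the frequency-independence hypothesis is essential.

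Next, dividing through by $\norm{\Delta u}^2$ (valid since $u_1 \neq u_2$ in the SRG definition) and setting $z = z_R(u_1,u_2)$, I would use
\begin{IEEEeqnarray*}{rCl}
\abs{z}^2 = \frac{\norm{\Delta y}^2}{\norm{\Delta u}^2}, \qquad \Re z = \frac{\Re \bra{\Delta u}\ket{\Delta y}}{\norm{\Delta u}^2},
\end{IEEEeqnarray*}
to recast the IQC as the planar inequality $d\abs{z}^2 + (b+c)\Re z + a \geq 0$. Completing the square in $z$ shows that the region
\begin{IEEEeqnarray*}{rCl}
S \coloneqq \{z \in \C \mid d\abs{z}^2 + (b+c)\Re z + a \geq 0\}
\end{IEEEeqnarray*}
is, depending on the sign of $d$ and the discriminant $(b+c)^2 - 4ad$, either a closed disk, the complement of an open disk, a closed half-plane, or all of $\C$.

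Finally, by construction $R$ satisfies the IQC if and only if $\srg{R} \subseteq S$, so the class $\mathcal{A}$ defined by the IQC is exactly $\{R : \srg{R} \subseteq S\}$ and hence $\srg{\mathcal{A}} \subseteq S$. Invoking \autocite[Thm. 3.5]{Ryu2021}, which states that disks, disk complements and half-planes are SRG-full, I would conclude that in fact $\srg{\mathcal{A}} = S$ and that $\srg{R} \subseteq \srg{\mathcal{A}}$ implies $R \in \mathcal{A}$, which is the nontrivial direction of SRG-fullness. The main obstacle I expect is bookkeeping: confirming that the transpose in \eqref{eq:quad_1} should be read in the Hermitian sense needed for Parseval to produce $\Re\bra{\cdot}\ket{\cdot}$, and verifying the case analysis for the geometry of $S$ so that each case lines up with a shape explicitly covered by Ryu's theorem.
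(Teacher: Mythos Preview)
Your proposal is correct and follows the same route as the paper: expand the quadratic form to $a\norm{\Delta u}^2 + (b+c)\bra{\Delta u}\ket{\Delta y} + d\norm{\Delta y}^2 \geq 0$, use Parseval to identify \eqref{eq:quad_1} with \eqref{eq:quad_2}, and invoke \autocite[Thm.~3.5]{Ryu2021}. The paper's proof is terser---it stops after the norm/inner-product inequality and appeals directly to Ryu's theorem---whereas you spell out the division by $\norm{\Delta u}^2$ and the resulting disk/half-plane geometry; that extra detail is helpful but not a different argument.
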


\begin{proof}
       Equation~\eqref{eq:quad_1} gives 
       \begin{IEEEeqnarray*}{rCl}
               a\norm{\Delta \hat{u}}^2 + (b + c)\bra{\Delta \hat{u}}\ket{\Delta
               \hat{y}} + d \norm{\Delta \hat{y}}^2 \geq 0.
       \end{IEEEeqnarray*}
       By Parseval's theorem, this is equivalent to
       \begin{IEEEeqnarray*}{rCl}
               a\norm{\Delta u}^2 + (b + c)\bra{\Delta u}\ket{\Delta y} + d
               \norm{\Delta y}^2 \geq 0,
       \end{IEEEeqnarray*}
       which is also implied by \eqref{eq:quad_2}.  The result then follows from
       \autocite[Thm. 3.5]{Ryu2021}.
\end{proof}

\section{Scaled relative graphs of LTI systems}\label{sec:LTI}

In this section, we investigate the relationship between the SRG of an LTI system and
the classical tool of the Nyquist diagram.
\textcite{Huang2020a} show that the SRG of a diagonal matrix $A$ is the
convex hull of the points $z_A(u_i)$, where $u_i$ are the eigenvectors of $A$, and
the convex hull is taken under the Beltrami-Klein change of coordinates.  Here,
we develop the analogous result for LTI operators on $L_2(\C)$.  We show that the SRG of
such an LTI operator is the convex hull of its Nyquist diagram, under the same
nonlinear change of coordinates.

\subsection{Hyperbolic geometry}\label{sec:hgeo}

We recall some necessary details from hyperbolic geometry.
The notation we use is consistent with \textcite{Huang2020a}.

\begin{definition}\label{def:arc}
        Let $z_1, z_2 \in \C_{\Im \geq 0} \coloneqq \{z \in \C \, | \, \Im(z) \geq 0\}$, the
        upper half complex plane.  We define the following subsets of $\C_{\Im \geq 0}$:
        \begin{enumerate}
                \item 
                        $\text{Circ}\,({z_1},\, {z_2})$ is the circle through $z_1$ and $z_2$ with
                        centre on the real axis.  If $\Re(z_1) = \Re(z_2)$, this is
                        defined as the line extending $[z_1, z_2] \coloneqq \{\alpha
                                z_1 + (1 - \alpha) z_2\, |\, \alpha \in [0, 1]\}$.
                \item $\text{Arc}_{\min}\,(z_1,\, z_2)$ is the arc of
                        $\text{Circ}\,({z_1},\, {z_2})$ in $\C_{\Im \geq 0}$.  If $\Re(z_1) =
                        \Re(z_2)$, then $\text{Arc}_{\min}\,(z_1,\, z_2)$ is $[z_1,
                        z_2]$ (which is the singleton $\{z_1\}$ if $z_1 = z_2$).
                \item Given $z_1, \ldots, z_m \in \C_{\Im \geq 0}$, the \emph{arc-edge polygon}
                        is defined by: $\text{Poly}\,(z_1) \coloneqq \{z_1\}$ and
                        $\text{Poly}\,(z_1, \ldots, z_m)$ is the smallest simply
                        connected set containing $S$, where
                        \begin{IEEEeqnarray*}{rCl}
                                S &=& \bigcup_{i, j = 1\ldots m} \text{Arc}_{\min}\,
                                (z_i, z_j).
                        \end{IEEEeqnarray*}
        \end{enumerate}
\end{definition}
Note that, as $\Poly{z_1, \ldots, z_{m-1}} \subseteq \Poly{z_1, \ldots, z_{m-1},
z_m} \subseteq \C_{\Im \geq 0}$, the set $\Poly{Z}$, where $Z$ is a countably infinite sequence of
points in $\C_{\Im \geq 0}$, is well defined (by the monotone convergence theorem) as the limit $\lim_{m \to \infty} \Poly{Z_m}$, where $Z_m$
is the length $m$ truncation of $Z$.

The notions of Definition~\ref{def:arc} form the basis of the Poincar\'e half plane
model of hyperbolic geometry.  Under the Beltrami-Klein mapping, $\C_{\Im \geq 0}$ is mapped
onto the unit disc, and $\Arc{z_1, z_2}$ is mapped to a straight line segment.  The
Beltrami-Klein mapping is given by $f \circ g$, where
\begin{IEEEeqnarray*}{rCl}
        f(z) &=& \frac{2z}{1 + |z|^2},\\
        g(z) &=& \frac{z - i}{z + i}.
\end{IEEEeqnarray*}
We make the following definitions of convexity and the convex hull in the Poincar\'e half plane model.

\begin{definition}
        A set $S \subseteq \C_{\Im \geq 0}$ is called \emph{hyperbolic-convex} or \emph{h-convex} if 
        \begin{IEEEeqnarray*}{rCl}
                z_1, z_2 \in S \implies \Arc{z_1, z_2} \in S.
        \end{IEEEeqnarray*}
        Given a set of points $P \in \C_{\Im \geq 0}$, the \emph{h-convex hull of} $P$ is the
        smallest h-convex set containing $P$.
\end{definition}

Note that h-convexity is equivalent to Euclidean convexity under the Beltrami-Klein
mapping.  $\Arc{z_1, z_2}$ is the minimal geodesic between $z_1$ and $z_2$ under
the Poincar\'e metric, so h-convexity may be thought of as geodesic convexity with
respect to this metric.  

\subsection{LTI SRGs and the Nyquist diagram}

Let $g: L_2(\C) \to L_2(\C)$ be linear and time invariant, and denote its transfer
function by $G(s)$.
$g$ maps a complex sinusoid $u(t) = a e^{j\omega t}$ to the complex sinusoid\footnote{While these complex exponentials do not belong to $L_2(\C)$,
they can be treated as the limit of a series of signals in $L_2(\C)$ (for example,
truncations to a finite number of periods), and their inner products can be computed
accordingly.}
$y(t) = |G(j \omega)| e^{\angle G( j \omega) + j \omega t}$.  

\begin{definition}
        The Nyquist diagram $\nyq{G}$ of an operator $g: L_2(\C) \to L_2(\C)$ is the locus of points
        $\{G(j \omega)\,|\, \omega \in \R\}$.
\end{definition}

\begin{theorem}\label{prop:nyq}
        Let $g: L_2(\C) \to L_2(\C)$ be linear and time invariant.
        $\srg{G} \cap \C_{\Im \geq 0}$ is the h-convex hull of $\nyq{G} \cap \C_{\Im \geq 0}$.
\end{theorem}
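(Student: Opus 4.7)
The plan is to exploit the fact that Fourier analysis diagonalises LTI operators, reducing the proposition to a continuous analogue of the diagonal-matrix result of \textcite{Huang2020a}. First I would use Parseval's theorem to express the data defining the SRG in the Fourier domain. Given nonzero $v \in L_2(\C)$, let $\hat v$ denote its Fourier transform and define the probability measure $d\mu(\omega) = |\hat v(\omega)|^2\,d\omega / (2\pi \norm{v}^2)$ on $\R$. Since $\widehat{gv}(\omega) = G(j\omega)\hat v(\omega)$, Parseval yields
\begin{IEEEeqnarray*}{rCl}
X(\mu) &\coloneqq& \frac{\Re \bra{v}\ket{gv}}{\norm{v}^2} = \int_{\R} \Re G(j\omega)\,d\mu(\omega),\\
R(\mu) &\coloneqq& \frac{\norm{gv}^2}{\norm{v}^2} = \int_{\R} |G(j\omega)|^2\,d\mu(\omega),
\end{IEEEeqnarray*}
so that the corresponding SRG point in $\C_{\Im \geq 0}$ is $z_g(v) = X(\mu) + i\sqrt{R(\mu) - X(\mu)^2}$.

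Next, I would prove a geometric lemma linking $(X,R)$-space to the hyperbolic structure on $\C_{\Im \geq 0}$: the map $\phi(X, R) \coloneqq X + i\sqrt{R - X^2}$, defined on $\{R \geq X^2\}$, sends Euclidean line segments to arcs in the sense of Definition~\ref{def:arc}. Writing $R_k = x_k^2 + y_k^2$ with $y_k \geq 0$ and parameterising the segment from $(x_1, R_1)$ to $(x_2, R_2)$ by $\alpha \in [0,1]$, the relations $X = \alpha x_1 + (1-\alpha)x_2$ and $X^2 + Y^2 = \alpha R_1 + (1-\alpha) R_2$ eliminate to the equation of a circle centred on the real axis passing through $z_k = x_k + iy_k$, namely $\Arc{z_1, z_2}$. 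Iterating, $\phi$ carries the Euclidean convex hull of a finite set $\{(x_k, R_k)\}_{k=1}^m$ onto $\Poly{z_1, \ldots, z_m}$, and the continuous case then yields the full h-convex hull.

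Finally, I would identify the attainable set $\{(X(\mu), R(\mu))\}$, as $\mu$ ranges over probability measures on $\R$, with the closed Euclidean convex hull of the moment curve $\{(\Re G(j\omega), |G(j\omega)|^2) : \omega \in \R\}$, via a standard argument (Caratheodory's theorem plus approximation by finitely-supported measures). Pushing forward through $\phi$ gives $\srg{g} \cap \C_{\Im \geq 0}$ as the h-convex hull of the set $\{\Re G(j\omega) + i|\Im G(j\omega)|\}$, which, in the usual case of a real LTI system where $G(-j\omega) = \overline{G(j\omega)}$, coincides with $\nyq{g} \cap \C_{\Im \geq 0}$.

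The main obstacle is the geometric lemma identifying $\phi$-images of Euclidean line segments with hyperbolic arcs, and more broadly showing that $\phi$ intertwines Euclidean convexity on $\{R \geq X^2\}$ with h-convexity on $\C_{\Im \geq 0}$. This is essentially the substance of the Beltrami-Klein change of coordinates, but it requires care to handle the degenerate case $x_1 = x_2$ and to verify that the arc-edge polygon of Definition~\ref{def:arc} really coincides with the h-convex hull of its vertices. A secondary technical point is the density argument needed to pass from finitely-supported measures, for which the arc construction applies directly, to the arbitrary probability measures corresponding to general $L_2$ signals.
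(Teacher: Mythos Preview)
The paper does not actually give a proof of this theorem: it defers to the submitted journal version \autocite{Chaffey2021c} and states only that the argument is closely related to \textcite[Thm.~3.1]{Huang2020a}, the diagonal-matrix result. Your proposal is precisely the continuous-spectrum analogue of that diagonal argument---Fourier diagonalisation, the probability-measure reparametrisation, the map $\phi(X,R)=X+i\sqrt{R-X^2}$ intertwining Euclidean and hyperbolic convexity, and Carath\'eodory---so it is exactly the approach the paper points to. The caveats you flag (density of absolutely continuous measures versus deltas, the degenerate case $x_1=x_2$, and the need for conjugate symmetry $G(-j\omega)=\overline{G(j\omega)}$ to identify $\{\Re G(j\omega)+i|\Im G(j\omega)|\}$ with $\nyq{G}\cap\C_{\Im\geq 0}$) are genuine but do not affect the overall strategy.
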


The proof of Theorem~\ref{prop:nyq} is closely related to the proof of \textcite[Thm.
3.1]{Huang2020a}, and may be found in the submitted journal version of this paper \autocite{Chaffey2021c}.  A consequence of
Theorem~\ref{prop:nyq} is that the SRG of an LTI operator is bounded by its
Nyquist diagram.

Given Theorem~\ref{prop:nyq}, we recover the following two familiar properties of
the Nyquist diagram as special cases of Propositions~\ref{prop:L2}
and~\ref{prop:inc_passive}.

\begin{corollary}
        The $L_2$ gain of a stable transfer function $G(s)$ is the largest magnitude of its
        Nyquist diagram, $\max_{\omega \in \R} |G(j\omega)|$.
\end{corollary}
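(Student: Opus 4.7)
The plan is to combine Proposition~\ref{prop:L2} with Theorem~\ref{prop:nyq}. Since the incremental $L_2$ gain and the $L_2$ gain coincide for LTI operators, Proposition~\ref{prop:L2} tells us the $L_2$ gain of $G$ is the supremum of $|z|$ over $\srg{G}$. By Theorem~\ref{prop:nyq}, $\srg{G}$ (which is contained in $\C_{\Im \geq 0}$ by construction of the SRG) equals the h-convex hull of $\nyq{G} \cap \C_{\Im \geq 0}$. It therefore suffices to show that taking the h-convex hull does not enlarge the supremum modulus, which reduces to proving that closed disks centred at the origin are h-convex.

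The key geometric lemma is that $\{z \in \C_{\Im \geq 0} : |z| \leq \gamma\}$ is h-convex, so it contains the h-convex hull of any subset it contains. I would prove this by parameterising a generic arc $\Arc{z_1, z_2}$ as $z(\theta) = c + \rho e^{i\theta}$, with $c \in \R$ and $\rho > 0$ the centre and radius of $\text{Circ}\,(z_1, z_2)$, and $\theta$ ranging over a subinterval of $[0,\pi]$ connecting $z_1$ and $z_2$. Because $|z(\theta)|^2 = c^2 + \rho^2 + 2 c \rho \cos \theta$ is monotonic in $\theta$ on $[0,\pi]$, the maximum of $|z|$ on the arc is attained at an endpoint; hence if both endpoints lie in the disk of radius $\gamma$, the entire arc does.

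Combining these facts yields $\sup_{z \in \srg{G}} |z| = \sup_{z \in \nyq{G} \cap \C_{\Im \geq 0}} |z|$, which by the conjugate symmetry $G(-j\omega) = \overline{G(j\omega)}$ of the Nyquist diagram of a real transfer function equals $\max_{\omega \in \R} |G(j\omega)|$. The main obstacle is the h-convexity of the disk; although the monotonicity calculation itself is elementary, it is the step that invokes the hyperbolic geometry of Section~\ref{sec:hgeo}, and it is what makes the geometric bridge between the SRG and the Nyquist diagram effective. Everything else is a direct reading of the results already established in the paper.
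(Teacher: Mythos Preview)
Your proposal is correct and follows exactly the route the paper indicates: the corollary is stated as an immediate consequence of Theorem~\ref{prop:nyq} together with Proposition~\ref{prop:L2}, with no further proof given. You have simply supplied the missing geometric detail---the h-convexity of origin-centred disks via the monotonicity of $|c+\rho e^{i\theta}|^2$ on $[0,\pi]$---that the paper leaves implicit.
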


\begin{corollary}
        A causal transfer function $G(s)$ is passive if and only if its Nyquist diagram lies
        in the right half plane.
\end{corollary}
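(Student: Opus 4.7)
The plan is to chain together the equivalences $G \text{ passive} \iff G \text{ incrementally passive} \iff G \text{ incrementally positive} \iff \srg{G} \subseteq \C_{\Re \geq 0} \iff \nyq{G} \subseteq \C_{\Re \geq 0}$. The first equivalence will use that $G$ is LTI, as noted in the text preceding Proposition~\ref{prop:L2}; the second will use causality of $G$, as mentioned in the same passage (citing \autocite[Lem.~2]{Desoer1975}); and the third is Proposition~\ref{prop:inc_passive}. Only the final equivalence requires new argument, and it will rest on Theorem~\ref{prop:nyq}.

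To establish $\srg{G} \subseteq \C_{\Re \geq 0} \iff \nyq{G} \subseteq \C_{\Re \geq 0}$, first observe that $\angle(\cdot, \cdot) \in [0, \pi]$ by construction, so $\srg{G} \subseteq \C_{\Im \geq 0}$. Theorem~\ref{prop:nyq} then identifies $\srg{G}$ with the h-convex hull of $\nyq{G} \cap \C_{\Im \geq 0}$. Since $G$ has real coefficients, $\nyq{G}$ is symmetric under complex conjugation, so the inclusion $\nyq{G} \subseteq \C_{\Re \geq 0}$ is equivalent to $\nyq{G} \cap \C_{\Im \geq 0} \subseteq \C_{\Re \geq 0}$. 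The forward implication is then immediate because the Nyquist locus is contained in its h-convex hull. The reverse implication reduces to showing that the closed first quadrant $\C_{\Re \geq 0} \cap \C_{\Im \geq 0}$ is h-convex, so that the h-convex hull of any subset of it remains inside it.

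The heart of the proof is this h-convexity claim. Taking $z_1, z_2$ in the first quadrant: if $\Re(z_1) = \Re(z_2)$, then $\Arc{z_1, z_2}$ is a vertical segment with non-negative real part throughout. Otherwise, $\Arc{z_1, z_2}$ is an arc of a circle centred at some $(c, 0) \in \R$ with radius $r > 0$; parametrising the upper semicircle by $\theta \mapsto (c + r\cos\theta, r\sin\theta)$ for $\theta \in (0, \pi)$, the condition $\Re(z_i) \geq 0$ reads $\cos\theta_i \geq -c/r$. Since $\cos$ is monotone on $(0, \pi)$, every $\theta$ between $\theta_1$ and $\theta_2$ satisfies $\cos\theta \geq \min(\cos\theta_1, \cos\theta_2) \geq -c/r$, so $\Re(z(\theta)) = c + r\cos\theta \geq 0$ along the arc. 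Geometrically, this captures the fact that the positive imaginary axis is itself a geodesic of the Poincar\'e half plane model, bounding an h-convex half-space.

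The main obstacle is precisely this h-convexity step; however, as above it yields to an elementary parametrisation. All other links in the chain are direct invocations of results already established in the excerpt.
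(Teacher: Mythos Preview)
Your chain of equivalences is exactly the route the paper has in mind: the corollary is stated without proof, immediately after Theorem~\ref{prop:nyq}, as a ``special case'' obtained by combining that theorem with Proposition~\ref{prop:inc_passive} and the LTI/causal equivalences noted before Proposition~\ref{prop:L2}. The one substantive gap the paper leaves implicit --- that the closed first quadrant is h-convex, so the h-convex hull of $\nyq{G}\cap\C_{\Im\ge 0}$ cannot cross the imaginary axis --- you have identified and handled correctly via the monotonicity of $\cos$ on $(0,\pi)$ (equivalently, the imaginary axis is itself a Poincar\'e geodesic).
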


        The Nyquist diagram of the first order lag
        \begin{IEEEeqnarray*}{rCl}
                G(s) = \frac{1}{s + 1}
        \end{IEEEeqnarray*}
        is the circle in $\C$ with centre $0.5$ and radius $0.5$
        (Figure~\ref{fig:lag}, top).  Under the
        Beltrami-Klein transformation, this is a straight line (which is evident as
        it is a circle with centre on the real axis), and is therefore its own
        h-convex hull (Figure~\ref{fig:lag}, bottom).  It follows that $\srg{G} = \nyq{G}$.

        \begin{figure}[h]
                \centering
                \includegraphics[width=0.8\linewidth]{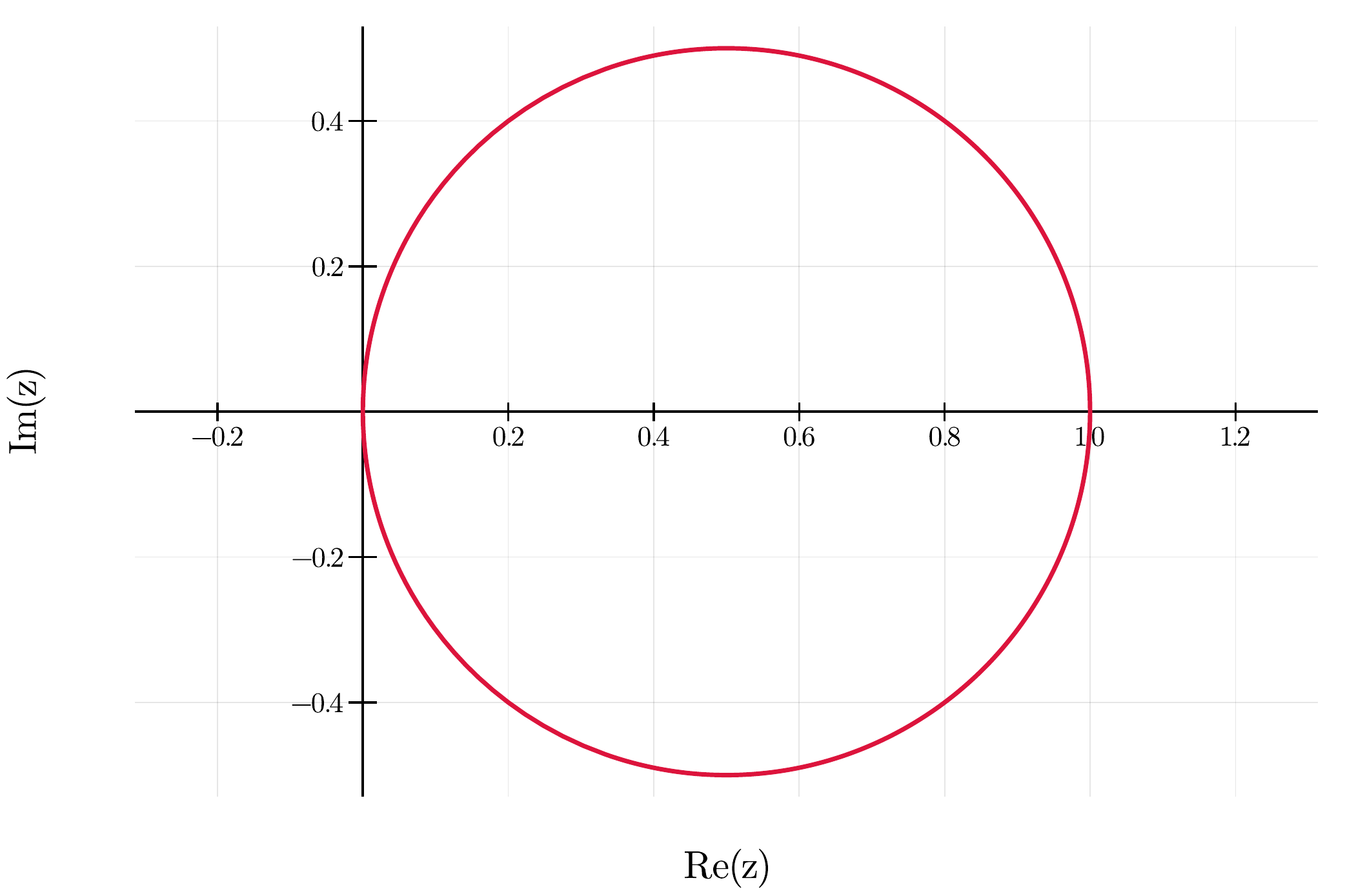}
                \includegraphics[width=0.8\linewidth]{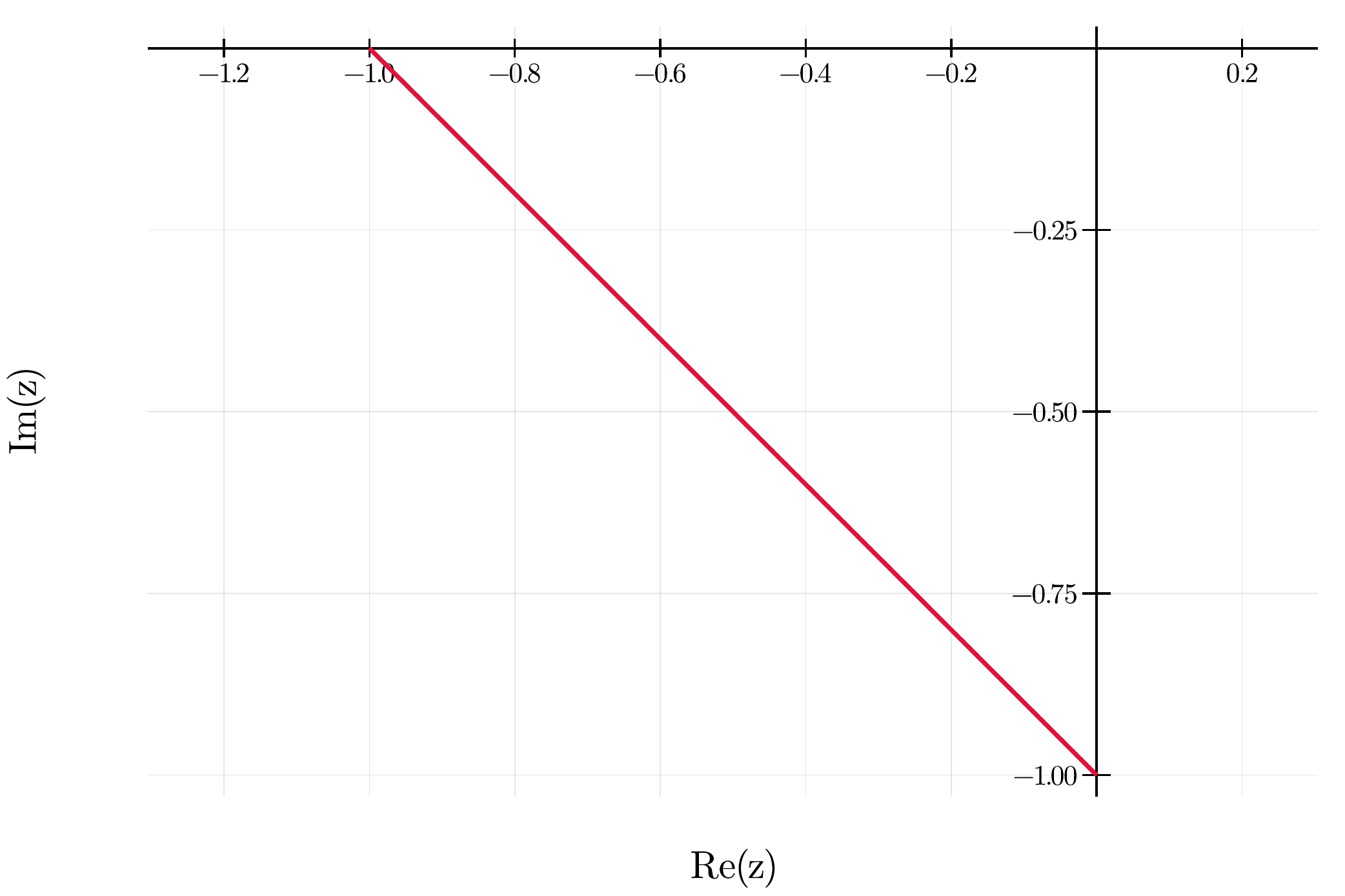}
                \caption{The SRG of the first order lag $1/(s + 1)$ is its Nyquist
                diagram (top).  Under the Beltrami-Klein mapping, the upper half of
                the Nyquist diagram maps to a straight line on the unit disc (bottom).}%
                \label{fig:lag}
        \end{figure}

        Two examples of transfer functions whose Nyquist diagrams are proper subsets
        of their SRGs are illustrated in Figures~\ref{fig:second_order} and
        \ref{fig:third_order}.  These are the systems $1/(s^2 + 2s + 1)$ and $1/(s^3
        + 5s^2 + 2s + 1)$ respectively.  The upper plots illustrate the SRGs, while
        the lower plots show the corresponding regions under the Beltrami-Klein
        mapping.

        \begin{figure}[h]
                \centering
                \includegraphics[width=0.8\linewidth]{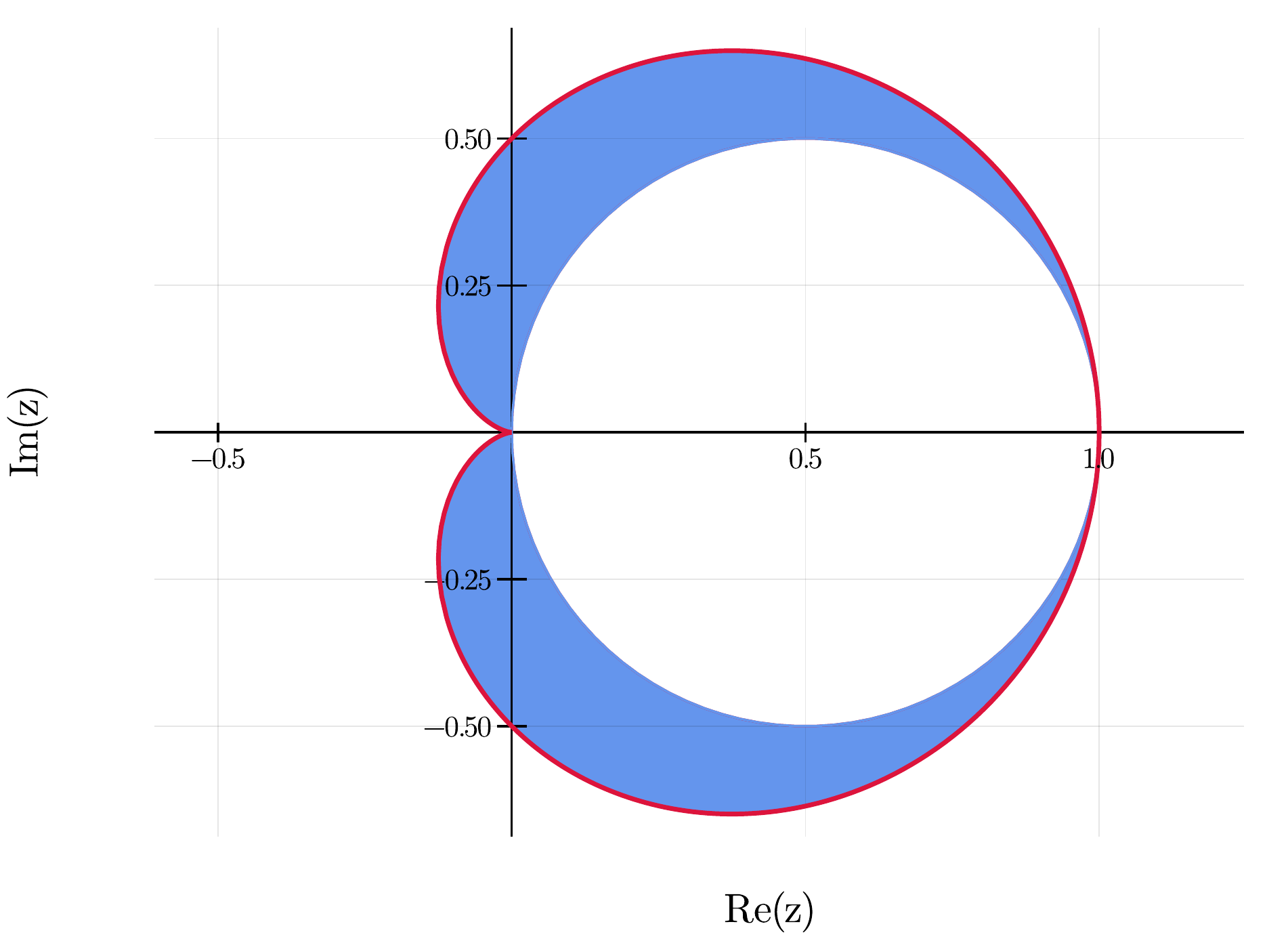}
                \includegraphics[width=0.8\linewidth]{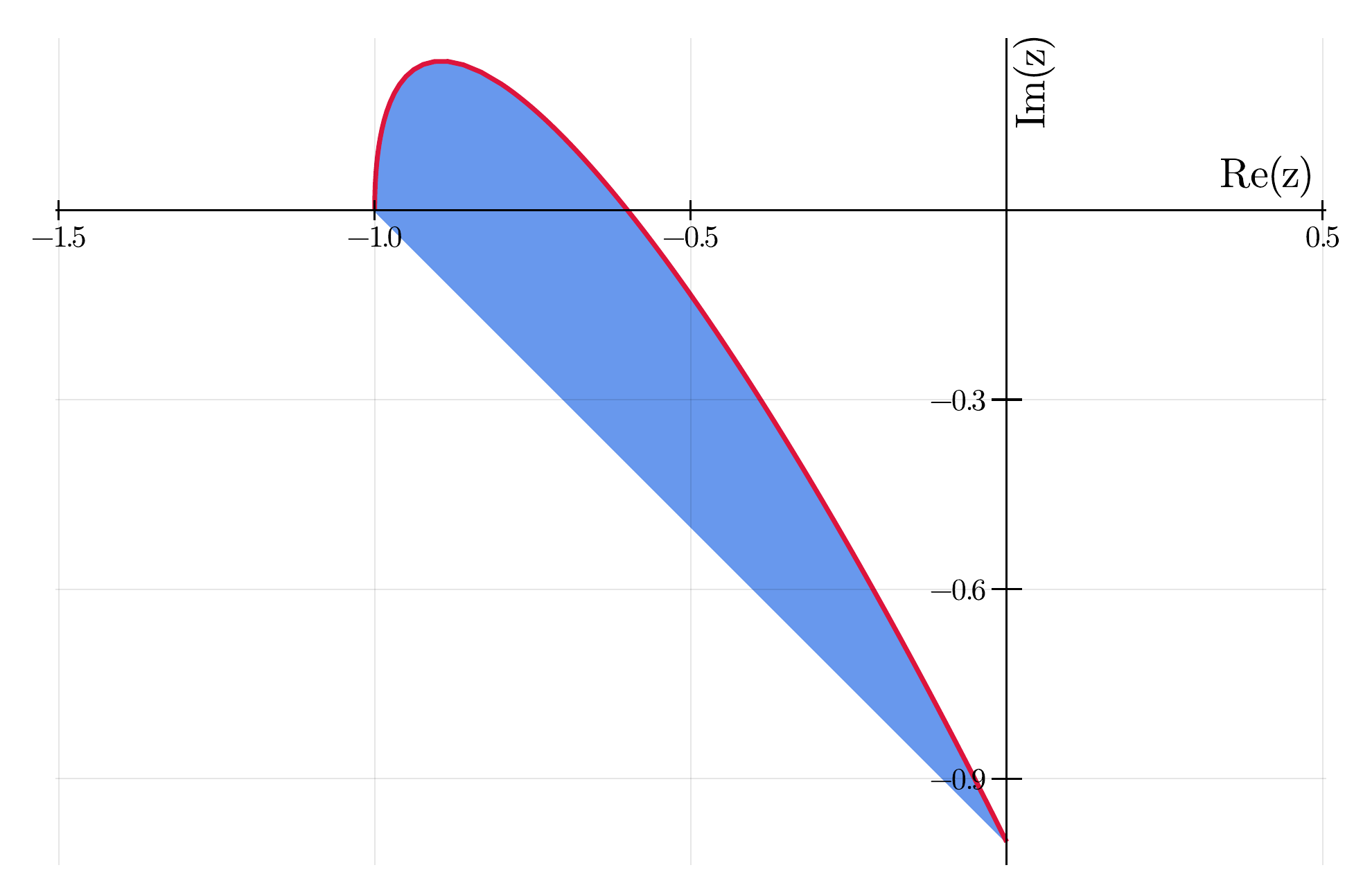}
                \caption{SRG of the transfer function $1/(s^2 + 2s + 1)$ (top). The red
                        curve is its Nyquist diagram.  The image of the upper half of
                        the SRG under the Beltrami-Klein mapping is shown below.}%
                \label{fig:second_order}
        \end{figure}

        \begin{figure}[h]
                \centering
                \includegraphics[width=0.8\linewidth]{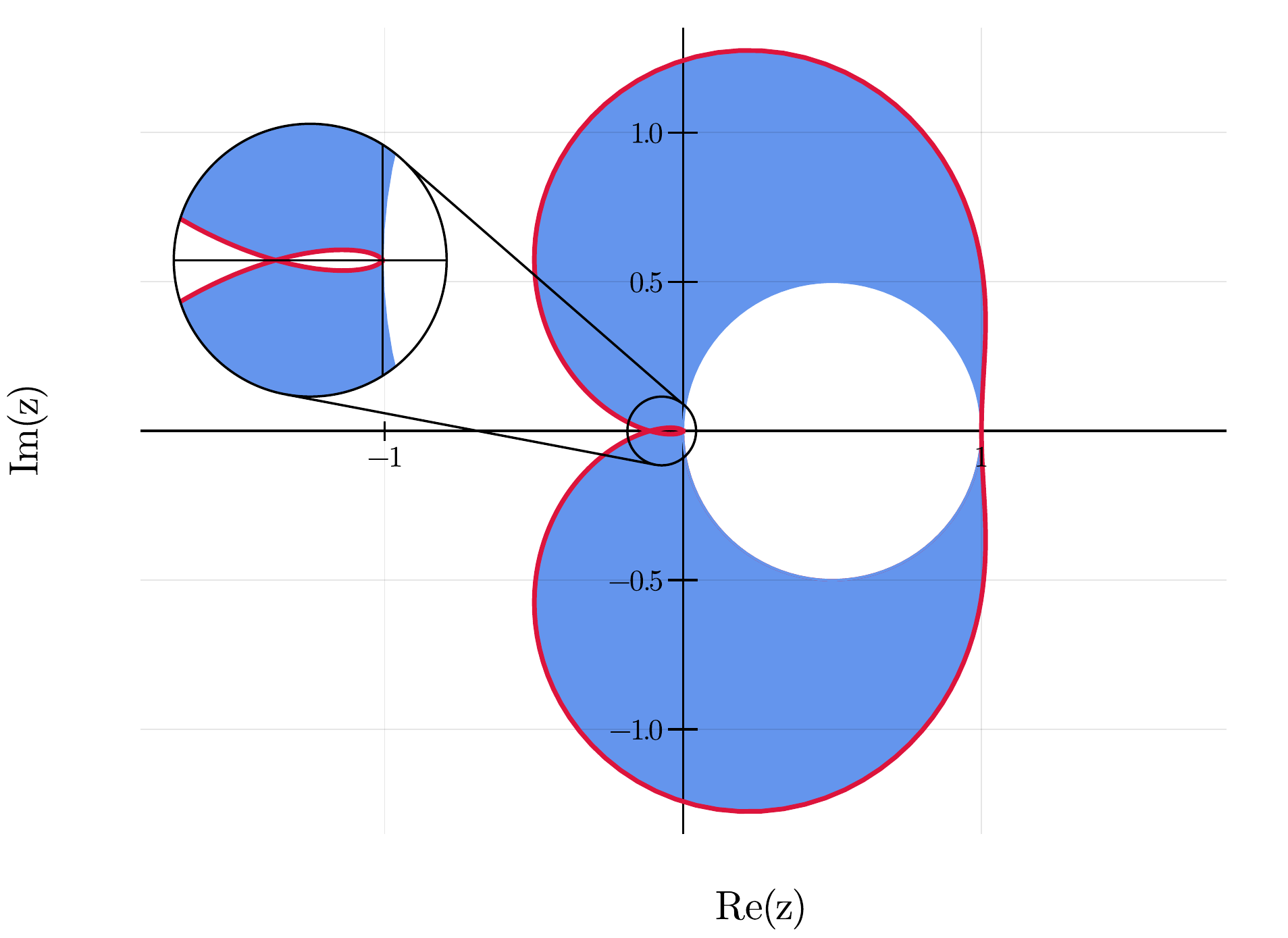}
                \includegraphics[width=0.8\linewidth]{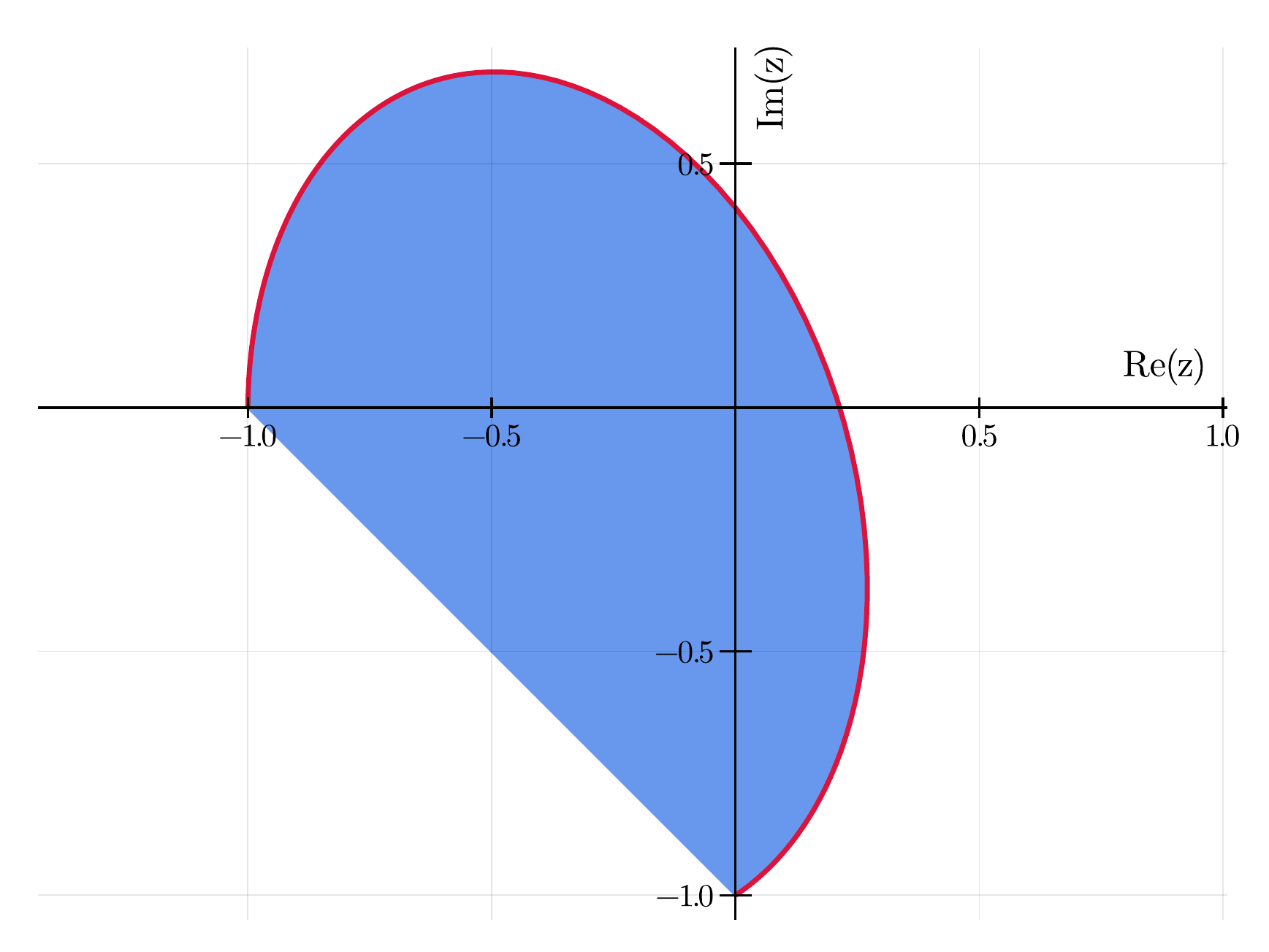}
                \caption{SRG of the transfer function $1/(s^3 + 5s^2 + 2s + 1)$ (top). The
                        red curve is its Nyquist diagram. The image of the upper half of
                        the SRG under the Beltrami-Klein mapping is shown below.}%
                \label{fig:third_order}
        \end{figure}

\section{Scaled relative graphs of static nonlinearities}\label{sec:NL}

The second class of systems we consider are static nonlinearities, that is, systems
governed by a relation
\begin{IEEEeqnarray*}{rCl}
        y(t) = \phi(u(t))
\end{IEEEeqnarray*}
between the input at time $t$ and the output at time $t$.  In general, it is difficult to 
analytically compute the SRG of a static nonlinearity.
However, valuable information about a system can be
obtained from under- and over-approximations of its SRG.  In this section, we
describe several methods for approximating the SRG of a static nonlinearity,
demonstrated on the running example of a saturation, $y = \sat{u}$:
\begin{IEEEeqnarray*}{rCl}
        \sat{u} \coloneqq \begin{cases}{}
                        -1&u < -1\\
                        u &|u| \leq 1\\
                        1 &u > 1.
                \end{cases}
\end{IEEEeqnarray*}

\subsection{Sampling the scaled relative graph}
The simplest method of approximating the SRG of a system is to sample the input space
and directly compute the SRG over these samples, producing a subset of the full SRG.

Fourier analysis allows the computation of the SRG over individual samples to be made
more computationally tractable.  While computing the inner product over a continuous
signal may in general be expensive, by Parseval's theorem,
we have $\bra{u}\ket{y} = \bra{\hat{u}}\ket{\hat{y}}$, where $\hat{x}$ represents the
Fourier transform of $x$.  If the input signals are
chosen to have a small number of nonzero Fourier coefficients (like, for example,
sinusoids), inner products involving the inputs $u_1, u_2$ can be computed using a small number of arithmetic
operations.  However, for a nonlinear system, the spectrum of the output may still be
infinite.   The action of the nonlinear system can be approximated by restricting to
a finite number of harmonics of the output.

The classical method of describing function analysis approximates the response of a
nonlinear system, excited by a sinusoidal input, by the fundamental frequency
component of the output \autocite{Krylov1947}. This provides a generalization of a
transfer function which is, in general,
dependent on both the amplitude and frequency of the input sinusoid.  If higher
harmonics are filtered out by other components of the system, this provides a
reasonable approximation of the nonlinear operator \autocite{Blagquiere1966,
Slotine1991}.

Let $R: L_2(\C) \to L_2(\C)$ be a nonlinear operator, $u(t) = ae^{j\omega t}$ and
$y(t) = Ru(t)$.  If $R$ is a static nonlinearity, $y$ will be periodic with the
same period as $u$, and can be expanded in 
the Fourier series $y(t) = \sum_{n=0}^\infty \hat{y}(n) e^{jn\omega t}$.
\emph{The describing function of $R$} is defined as $\df{R}(a, \omega) =
\hat{y}(1)/a$.  The describing function defines an operator $\df{R}$ on the set of complex
exponentials by $ae^{j\omega t} \mapsto \df{R}(a, \omega) ae^{j\omega t}$.  This
operator can be visualized on an SRG by restricting inputs $u_1, u_2$ to be of the
form $u_i(t) = a_i e^{j \omega_i t + j \psi_i}$.

For several common nonlinearities of practical importance, the describing function
can be computed analytically - see, for example, \autocite{Slotine1991}.
The describing function of the saturation is given by
\begin{IEEEeqnarray*}{rCl}
        \df{\sat{u}}(a) = \left\{ \begin{array}{c c}
                        1&\quad |a| < 1\\
                        \frac{2}{\pi}\left(\asin \frac{1}{a} + \frac{1}{a}\sqrt{1 -
        \frac{1}{a^2}}\right) &\quad |a| > 1.
                \end{array}\right.
\end{IEEEeqnarray*}
The SRG of this describing function is illustrated in Figure~\ref{fig:sat_df}.  This
SRG is computed over pairs of inputs $u_1(t) = a_1 \sin(\omega t)$, $u_2(t) = a_2 \sin(\omega t)$, 
where $\omega$ is arbitrary and $a_1, a_2$ are variables in $\R$.
\begin{figure}[h]
        \centering
        \includegraphics[width=0.8\linewidth]{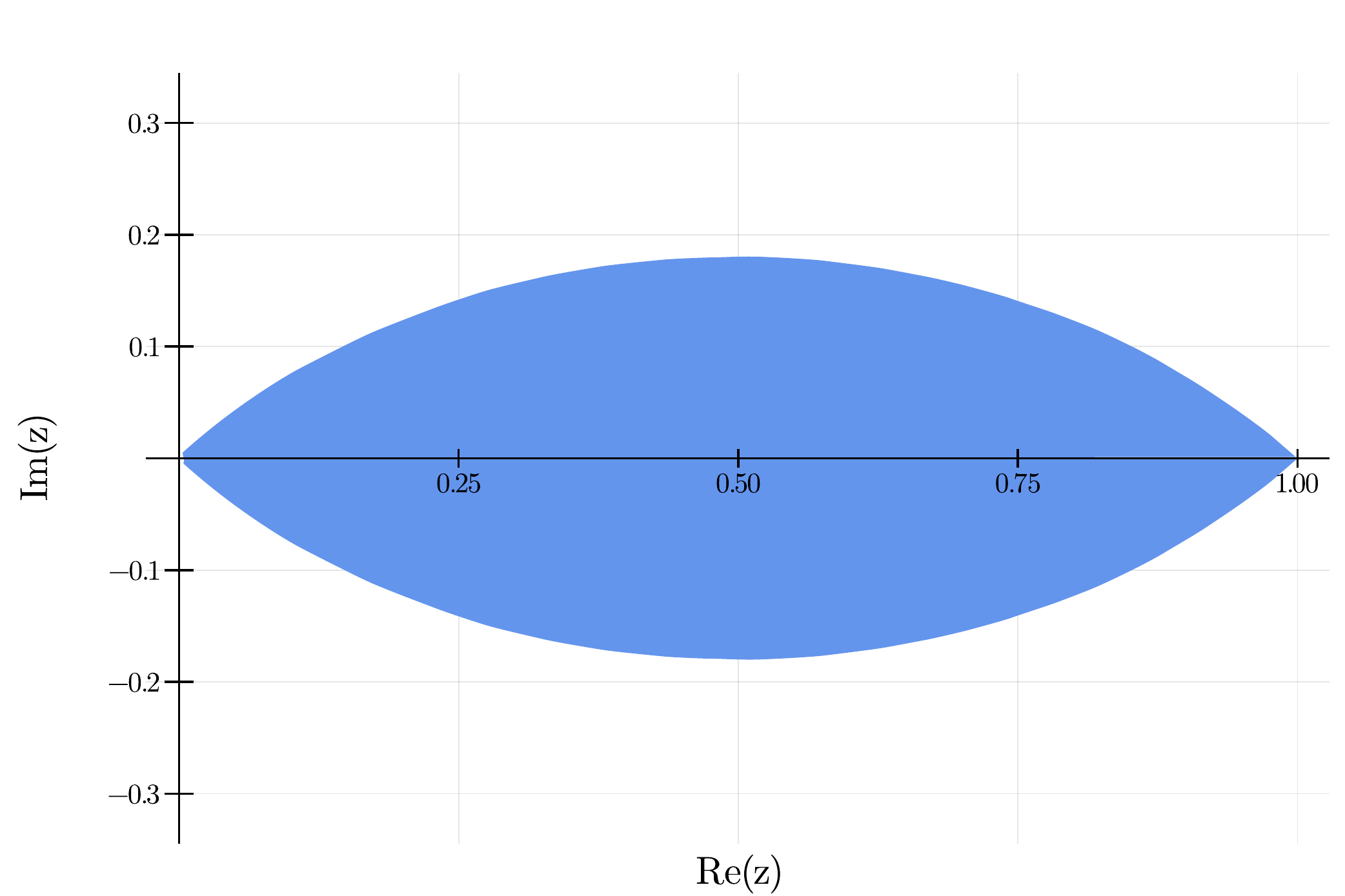}
        \caption{SRG of the describing function of a saturation over the inputs
                $u_1(t)
        = a_1 \sin(\omega t)$, $u_2(t) = a_2 \sin(\omega t)$.}%
        \label{fig:sat_df}
\end{figure}

Truncating the spectrum of the output of a nonlinearity to the fundamental frequency
component leads to an underestimation of the action of the nonlinearity, in the
following sense.

\begin{proposition}
        Given an operator $R: L_2(\C) \to L_2(\C)$ and inputs $u_1(t) = a_1 e^{j
        \omega t}$, $u_2(t) = a_2 e^{j \omega t}$, we have
        \begin{IEEEeqnarray*}{rCl}
                |z_{\df{R}}(u_1, u_2)| \leq |z_{R}(u_1 - u_2)|,&&\\
                \angle((u_1 - u_2),\, (\df{R}(u_1)-\df{R}(u_2))) &\leq&\\
                \angle((u_1 - u_2),\, (Ru_1-Ru_2)).
        \end{IEEEeqnarray*}
\end{proposition}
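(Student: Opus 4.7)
The proof plan is to decompose the output of $R$ into its fundamental-frequency component (which \emph{is} the describing-function output) plus higher harmonics, and then to show that the higher harmonics are orthogonal, in $L_2$, to the input difference $u_1 - u_2$. Since $u_1(t) = a_1 e^{j\omega t}$ and $u_2(t) = a_2 e^{j\omega t}$ share the common frequency $\omega$, the difference $u_1 - u_2 = (a_1 - a_2) e^{j \omega t}$ is a pure sinusoid at frequency $\omega$. The static nonlinearity $R$ sends each $u_i$ to a periodic signal $Ru_i(t) = \sum_{n \geq 0} c_n^{(i)} e^{j n \omega t}$, with $\df{R}u_i(t) = c_1^{(i)} e^{j \omega t}$ by definition. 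Thus $\df{R}u_1 - \df{R}u_2$ is exactly the fundamental-frequency component of $Ru_1 - Ru_2$, while the residual $(R - \df{R})u_1 - (R - \df{R})u_2$ is supported on harmonics $n \neq 1$.

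The two main computations then fall out. First, since $u_1 - u_2$ is supported purely on frequency $\omega$, Parseval's identity (applied to signals truncated to $N$ periods and then passed to the limit, as in the footnote on complex exponentials) gives
\begin{IEEEeqnarray*}{rCl}
\bra{u_1-u_2}\ket{Ru_1-Ru_2} = \bra{u_1-u_2}\ket{\df{R}u_1-\df{R}u_2}.
\end{IEEEeqnarray*}
Second, the orthogonality of distinct Fourier modes yields the Pythagorean decomposition $\|Ru_1 - Ru_2\|^2 = \|\df{R}u_1-\df{R}u_2\|^2 + \|(R-\df{R})u_1 - (R-\df{R})u_2\|^2$, so that $\|\df{R}u_1-\df{R}u_2\| \leq \|Ru_1 - Ru_2\|$. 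Dividing by $\|u_1-u_2\|$ immediately gives the magnitude inequality $|z_{\df{R}}(u_1,u_2)| \leq |z_R(u_1,u_2)|$.

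For the angle inequality, I write both angles as $\acos$ of the ratio $\Re\bra{u_1-u_2}\ket{\cdot}/(\|u_1-u_2\|\|\cdot\|)$, where the ``$\cdot$'' is either $\df{R}u_1-\df{R}u_2$ or $Ru_1-Ru_2$. The two numerators agree by the Parseval step, while the denominator for $\df{R}$ is no larger than that for $R$ by the Pythagorean step. When the common numerator is non-negative, the cosine is therefore larger for $\df{R}$, and since $\acos$ is monotone decreasing on $[-1,1]$ the angle is smaller, which is what is claimed. The main obstacle I expect is making the inner-product manipulations on the non-$L_2$ signals $e^{j\omega t}$ rigorous; handling this via a truncation-and-limit argument, exactly as in the paper's earlier footnote, is the cleanest route and does not change the algebra above.
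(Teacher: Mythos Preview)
Your approach is essentially the same as the paper's: the paper invokes Bessel's inequality where you use the Pythagorean decomposition (equivalent here, since $\df{R}u_i$ is the orthogonal projection of $Ru_i$ onto the fundamental mode), then uses Parseval and the orthogonality of $\{e^{jn\omega t}\}$ to equate the inner products, and finally appeals to the monotonicity of $\acos$ for the angle bound. Your explicit flagging of the sign condition on the common numerator is a refinement the paper's proof omits; otherwise the arguments coincide.
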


\begin{proof}
        From Bessel's inequality, we have
        \begin{IEEEeqnarray}{rCl}
                \norm{Ru_1 - Ru_2} \geq \norm{\hat{y}_1(1) -
                \hat{y}_2(1)},\label{eq:bessel}
        \end{IEEEeqnarray}
        where $\hat{y}_1(1), \hat{y}_2(1)$ are the first complex Fourier coefficients of
        $Ru_1, Ru_2$.  These are the describing function approximation of
        $R$ applied to $u_1$ and $u_2$.  It follows that the gain of the describing
        function approximation is at most the gain of the original function.

        Since $u_1 - u_2 = a_0e^{j\omega t}$, it follows from Parseval's theorem
        and the orthogonality of $\{e^{jn\omega t}\}_{n \in \mathbb{N}}$ 
        that $\bra{u_1 - u_2}\ket{Ru_1 - Ru_2} = \bra{u_1 - u_2}\ket{\df{R}(u_1) -
        \df{R}(u_2)}$.  The second inequality then follows from \eqref{eq:bessel} and the
        fact that $\acos$ is monotonically decreasing.
\end{proof}

A better approximation of the SRG of a nonlinearity can be gained by considering
more terms of the Fourier series of the output - this is the technique adopted
in higher-order extensions of the describing function \autocite{Nuij2006}.
We can also consider a larger class of input signals.
Figure~\ref{fig:harmonic_sat} shows
the SRG of a saturation, computed over the inputs $u_1 =k_1 + a_1 \sin(\omega t)$, $u_2 =
k_2 + a_2 \sin(\omega t)$, with the first 10 harmonics of the output calculated.

\begin{figure}[h]
        \centering
        \includegraphics[width=0.8\linewidth]{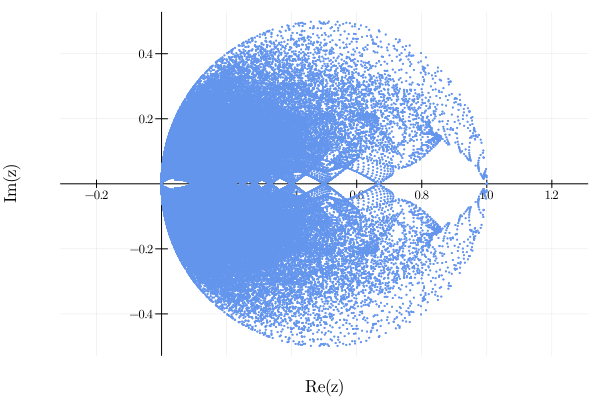}
        \caption{The SRG of a saturation, computed over a set of inputs of the form $u_1 = k_1 + a_1 \sin(\omega t)$, $u_2 =
k_2 + a_2 \sin(\omega t)$, with the output truncated in frequency to the first 10
harmonics.}%
        \label{fig:harmonic_sat}
\end{figure}

For differentiable nonlinearities, taking the discretization of the nonlinearity
along an input trajectory (that is, the Fr\'echet derivative of the nonlinearity)
gives an approximation of the system behavior at nearby trajectories.  Plotting the
SRG of the linearization provides another method of approximating the SRG of the
original system.  For the saturation, taking the linearization along constant
trajectories $u(t) \neq -1, 1$ gives either the identity or the zero map - the SRG of
the linearization is then the point $0$ if $|u(t)| < 1$ or the point $1$ otherwise.

\subsection{Bounding the scaled relative graph}

The SRG of a system is constrained by its input/output properties.  By plotting the
SRGs of each system property, we can build a set of geometric constraints on the SRG
of the system, using the fact that, for two classes of operators $\mathcal{A}$ and
$\mathcal{B}$, $\srg{\mathcal{A} \cap \mathcal{B}} \subseteq \srg{\mathcal{A}}\cap\srg{\mathcal{B}}$.  Theorem 4.1 of
\textcite{Ryu2021} shows that for SRG-full classes, a stronger result
holds.
\begin{proposition}\label{prop:intersect}
        (Theorem 4.1 \autocite{Ryu2021}): If $\mathcal{A}$ and $\mathcal{B}$ are
        SRG-full classes, then $\mathcal{A} \cap \mathcal{B}$ is SRG-full, and
        \begin{IEEEeqnarray*}{rCl}
                \srg{\mathcal{A} \cap \mathcal{B}} = \srg{\mathcal{A}}\cap
                \srg{\mathcal{B}}.
        \end{IEEEeqnarray*}
\end{proposition}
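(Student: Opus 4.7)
The plan is to prove the proposition in three stages: (a) the easy inclusion $\srg{\mathcal{A} \cap \mathcal{B}} \subseteq \srg{\mathcal{A}} \cap \srg{\mathcal{B}}$; (b) SRG-fullness of $\mathcal{A} \cap \mathcal{B}$; and (c) the reverse inclusion $\srg{\mathcal{A}} \cap \srg{\mathcal{B}} \subseteq \srg{\mathcal{A} \cap \mathcal{B}}$. Parts (a) and (b) drop out of the definitions once SRG-fullness of $\mathcal{A}$ and $\mathcal{B}$ is invoked; the real content lies in (c), which requires producing an operator whose SRG realises an arbitrary prescribed point of the complex plane.

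I would first handle (a): the map $\mathcal{C} \mapsto \srg{\mathcal{C}}$ is monotone, being defined as a union, so $\mathcal{A} \cap \mathcal{B} \subseteq \mathcal{A}$ gives $\srg{\mathcal{A} \cap \mathcal{B}} \subseteq \srg{\mathcal{A}}$, and symmetrically $\srg{\mathcal{A} \cap \mathcal{B}} \subseteq \srg{\mathcal{B}}$. I would then establish (b): the forward implication of SRG-fullness is immediate from the definition; conversely, if $\srg{R} \subseteq \srg{\mathcal{A} \cap \mathcal{B}}$, chaining with (a) gives $\srg{R} \subseteq \srg{\mathcal{A}}$, so $R \in \mathcal{A}$ by SRG-fullness of $\mathcal{A}$, and analogously $R \in \mathcal{B}$, hence $R \in \mathcal{A} \cap \mathcal{B}$.

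For (c), given $z \in \srg{\mathcal{A}} \cap \srg{\mathcal{B}}$, I would exhibit a single operator $R_z$ lying in both classes with $z \in \srg{R_z}$. The strategy is to pick $R_z$ whose SRG is as small as possible---a singleton $\{z\}$ or, under the standard real-axis-symmetric convention, the pair $\{z, \bar z\}$. Then $\srg{R_z} \subseteq \srg{\mathcal{A}} \cap \srg{\mathcal{B}}$ by construction, and SRG-fullness of $\mathcal{A}$ and $\mathcal{B}$ individually yields $R_z \in \mathcal{A} \cap \mathcal{B}$; since $z \in \srg{R_z}$, this gives $z \in \srg{\mathcal{A} \cap \mathcal{B}}$ as required.

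The main obstacle is therefore producing such a point-SRG operator $R_z$ for each $z = re^{i\theta}$. On $L_2(\C)$ the scalar multiplication $R_z u \coloneqq z u$ does the job: since $\norm{R_z v} = r \norm{v}$ and $\Re \bra{v}\ket{R_z v} = r \cos\theta \, \norm{v}^2$ for every nonzero $v$, one has $z_{R_z}(v) = r e^{i|\theta|}$, so the SRG collapses to the required pair. On a real Hilbert space no complex scalar acts, so instead I would fix an orthonormal basis, pair its vectors into two-dimensional blocks, and let $R_z$ act on each block as rotation by $\theta$ composed with scaling by $r$. Block-diagonality preserves the two identities above for every $v \in \mathcal{H}$, again realising the point-SRG operator and closing (c).
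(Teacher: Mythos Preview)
The paper does not actually prove this proposition; it merely quotes Theorem~4.1 of Ryu, Hannah and Yin without argument. Your three-stage proof---monotonicity for the easy inclusion, chaining SRG-fullness of $\mathcal{A}$ and $\mathcal{B}$ for fullness of the intersection, and constructing a point-SRG operator $R_z$ for the reverse inclusion---is correct and is precisely the argument given in the cited source. One small point worth making explicit: in the real, symmetric convention you need $\bar z \in \srg{\mathcal{A}}$ as well as $z$, but this is automatic since under that convention every SRG is closed under conjugation by definition; under the paper's upper-half-plane convention $\srg{R_z}=\{z\}$ and the issue does not arise.
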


We demonstrate over-approximation of the SRG again using the example of a saturation.
The saturation obeys the following two slope, or incremental sector, conditions.
\begin{IEEEeqnarray*}{rCl}
        \bra{u_1 - u_2}\ket{\sat{u_1} - \sat{u_2}} &\geq& 0\\
        \norm{\sat{u_1} - \sat{u_2}} &\leq& \norm{u_1 - u_2}.
\end{IEEEeqnarray*}

Note that the two conditions above are equivalent to the standard IQC for
incrementally sector-bounded nonlinearities. Denoting $u_1(t) - u_2(t)$ by $\Delta
u$ and $y_1(t) - y_2(t)$ by $\Delta y$, we have \autocite[Thm. 2, p.2]{Desoer1975}:
\begin{IEEEeqnarray*}{lrCl}
       & \frac{\Delta y}{\Delta u} \leq 1 \;\&\; \Delta y \Delta u &\geq& 0\\
        \iff & \Delta y\Delta u \leq \Delta u^2 \;\&\; \Delta y\Delta u &\geq& 0\\
        \iff & (\Delta y \Delta u - \Delta u^2)\Delta y \Delta u &\leq& 0\\
        \iff & \Delta y^2 - \Delta u \Delta y &\leq& 0\IEEEyesnumber\label{eq:osp}\\
        \iff &  \frac{1}{2}\begin{pmatrix} \Delta u\\ \Delta y \end{pmatrix}\tran
             \begin{pmatrix} 0 & 1\\ 1 & -2 \end{pmatrix} \begin{pmatrix} \Delta u\\
\Delta y \end{pmatrix} &\geq& 0.
\end{IEEEeqnarray*}

Integrating and applying Parseval's theorem, we have the incremental IQC
\begin{IEEEeqnarray*}{rCl}
\frac{1}{2}\int_{-\infty}^\infty \begin{pmatrix} \Delta \hat{u}(\omega)\\ \Delta
        \hat{y}(\omega) \end{pmatrix}\tran
        \begin{pmatrix} 0 & 1\\ 1 & -2 \end{pmatrix} \begin{pmatrix} \Delta
        \hat{u}(\omega)\\
\Delta \hat{y}(\omega) \end{pmatrix}\dd{\omega} \geq 0,
\end{IEEEeqnarray*}
where $\hat{x}(\omega)$ denotes the Fourier transform of $x(t)$.

Equation~\ref{eq:osp} states that the saturation is $1$-cocoercive.  It follows from
\autocite[Prop. 3.3]{Ryu2021} that the SRG of the saturation is contained in the disc
with centre $1/2$ and radius $1/2$, as shown in Figure~\ref{fig:sector_SRG}.

\begin{figure}[h]
        \centering
        \includegraphics[width=0.8\linewidth]{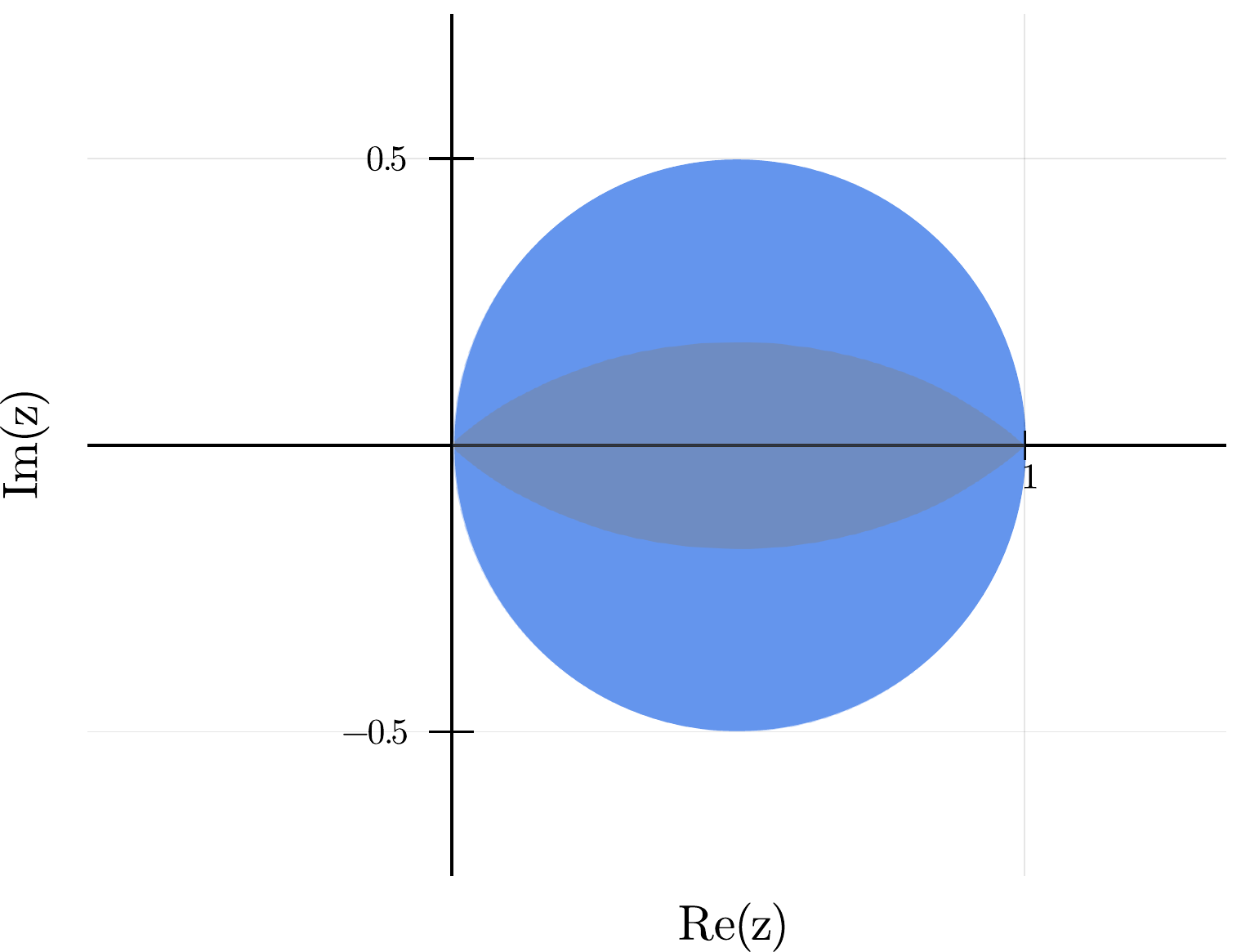}
        \caption{The SRG of the class of operators which are incrementally positive
        and have incremental $L_2$ gain of at most 1. The shadow indicates the
SRG of the describing function approximation.}%
        \label{fig:sector_SRG}
\end{figure}

The SRG shown in Figure~\ref{fig:sector_SRG} is SRG-full: if the SRG of a system $H$
lies within this SRG, then $H$ is incrementally passive and has $L_2$ gain less than
or equal to 1.  

In the previous section, we showed that sampling the SRG of the saturation appeared
to fill the disc with centre $1/2$ and radius $1/2$.  Here, we have shown that this
region bounds the SRG of the saturation.  In the submitted journal version of this
paper \autocite{Chaffey2021c}, we have refined this analysis, and shown that the SRG
of the saturation is \emph{precisely} the disc with centre $1/2$ and radius $1/2$.

\section{System analysis with scaled relative graphs}\label{sec:interconnection}

Under certain conditions, the SRG of an interconnection of systems is the
interconnection of the SRGs of the individual systems.  In the same way as the SRG can be used
to give intuitive, visual proofs of many standard results in optimization, it can be
used to give simple proofs of many classical results in system theory, which rely on
the inference of the properties of a system from the properties of its components.  We
demonstrate this with a simple, graphical proof of the classical incremental passivity theorem.

First, we introduce the necessary interconnection laws (and refer the reader to
\autocite[$\sec$ 4]{Ryu2021} for several other interconnection laws).  Given an
operator $R$, we denote by $R^{-1}$ the \emph{relational inverse} of $R$, that is the
map $u \mapsto \{v\, |\, Rv = u\}$.  This map always exists, and coincides with the
regular inverse when $R$ is an invertible operator.  We define inversion in the
complex plane by the M\"obius transformation $re^{j\omega} \mapsto (1/r)e^{j\omega}$.

\begin{proposition}\label{prop:inversion}
        (Theorem 4.3 \autocite{Ryu2021}): Given a class of operators $\mathcal{A}$, 
        $\srg{\mathcal{A}^{-1}} = (\srg{\mathcal{A}})^{-1}$. 
\end{proposition}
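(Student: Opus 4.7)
The plan is to reduce the statement about classes to the single-operator identity $\srg{R^{-1}} = (\srg{R})^{-1}$, which I will then prove by exhibiting a natural bijection between the pairs that generate each SRG and checking that $z$ is inverted in the Möbius sense.

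First, I would unfold the definitions: by construction,
\begin{IEEEeqnarray*}{rCl}
\srg{\mathcal{A}^{-1}} = \bigcup_{R \in \mathcal{A}} \srg{R^{-1}}, \qquad (\srg{\mathcal{A}})^{-1} = \bigcup_{R \in \mathcal{A}} (\srg{R})^{-1},
\end{IEEEeqnarray*}
since the Möbius map $re^{j\omega} \mapsto (1/r)e^{j\omega}$ commutes with arbitrary unions. Thus it suffices to prove, for any single operator $R$, that $\srg{R^{-1}} = (\srg{R})^{-1}$.

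Next, I would set up the pairing. Recall that the relational inverse is the set $R^{-1} = \{(y, u) \mid (u, y) \in R\}$. Pairs $(u_1, u_2)$ with $u_1 \neq u_2$ and distinct images $y_i \in R u_i$ correspond bijectively to pairs $(y_1, y_2)$ with $y_1 \neq y_2$ and $u_i \in R^{-1} y_i$ (pairs with $y_1 = y_2$ contribute the point $0$ to $\srg{R}$ and give $u_1 \neq u_2$ in a common fibre of $R^{-1}$, contributing $\infty$ under the Möbius map, which we treat symmetrically on both sides, or simply exclude — either convention is consistent). For any such paired choice I would then compute
\begin{IEEEeqnarray*}{rCl}
z_{R^{-1}}(y_1, y_2) = \frac{\norm{u_1 - u_2}}{\norm{y_1 - y_2}} \, e^{i \angle(y_1 - y_2,\, u_1 - u_2)}.
\end{IEEEeqnarray*}
The key observation is that the angle is symmetric: since $\bra{x}\ket{y} = \overline{\bra{y}\ket{x}}$, we have $\Re\bra{x}\ket{y} = \Re\bra{y}\ket{x}$, so $\angle(x,y) = \angle(y,x)$. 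Hence the exponential factor above equals $e^{i\angle(u_1 - u_2,\, y_1 - y_2)}$, and comparing with
\begin{IEEEeqnarray*}{rCl}
z_R(u_1, u_2) = \frac{\norm{y_1 - y_2}}{\norm{u_1 - u_2}} \, e^{i\angle(u_1 - u_2,\, y_1 - y_2)}
\end{IEEEeqnarray*}
gives $z_{R^{-1}}(y_1, y_2) = (z_R(u_1, u_2))^{-1}$ under the Möbius inversion that defines set inversion in $\mathbb{C}$. Because the pairing between admissible $(u_1, u_2)$ and $(y_1, y_2)$ is a bijection, taking the union over all admissible pairs yields $\srg{R^{-1}} = (\srg{R})^{-1}$, and combining with the first step finishes the proof.

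The only delicate point — and the main obstacle — is bookkeeping for degenerate pairs: when $u_1 \neq u_2$ but $Ru_1 = Ru_2$, the point $0 \in \srg{R}$ corresponds to a pair in a common fibre of $R^{-1}$ that would produce a point ``at infinity,'' and symmetrically for $R^{-1}$. One must either adopt the convention that $0^{-1} = \infty$ and treat $\infty$ consistently on both sides, or restrict to pairs giving finite nonzero $z$; either way the identity $\srg{R^{-1}} = (\srg{R})^{-1}$ is preserved because the pairing is still bijective and the Möbius inversion is an involution.
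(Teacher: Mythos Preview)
The paper does not supply its own proof of this proposition; it merely quotes the result as Theorem~4.3 of \textcite{Ryu2021}. Your argument is correct and is essentially the standard one: reduce to a single operator via the union definition, observe that swapping the roles of input and output inverts the gain ratio while leaving the angle unchanged (because $\Re\bra{x}\ket{y}=\Re\bra{y}\ket{x}$), and handle the $0\leftrightarrow\infty$ degeneracy by convention. This is the same computation that underlies the cited theorem, so there is nothing to contrast.
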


A class of operators $\mathcal{A}$ is said to
satisfy the chord property if $z \in \srg{\mathcal{A}}\setminus\{\infty\}$ implies $[z,
\bar z] \subseteq \srg{\mathcal{A}}$.
\begin{proposition}\label{prop:summation}
        (Theorem 4.4 \autocite{Ryu2021}): Let $\mathcal{A}$ and $\mathcal{B}$ be
        SRG-full classes such that either $\mathcal{A}$ or $\mathcal{B}$ satisfies
        the chord property and $\infty \nin \srg{\mathcal{A}}$ and $\infty \nin
        \srg{\mathcal{B}}$.  Then $\srg{\mathcal{A} + \mathcal{B}} = \srg{\mathcal{A}} +
        \srg{\mathcal{B}}$.
\end{proposition}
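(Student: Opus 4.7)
The plan is to prove the two inclusions of the equality separately. For the backward inclusion $\srg{\mathcal{A}} + \srg{\mathcal{B}} \subseteq \srg{\mathcal{A} + \mathcal{B}}$, the idea is to realise every point of $\srg{\mathcal{A}}$ and $\srg{\mathcal{B}}$ as the SRG of a scalar operator and then invoke SRG-fullness. For $\alpha \in \srg{\mathcal{A}}$, which is finite by the hypothesis $\infty \nin \srg{\mathcal{A}}$, define $A_\alpha \colon v \mapsto \alpha v$; a direct computation gives $\srg{A_\alpha} = \{\alpha, \bar\alpha\}$, and this lies in $\srg{\mathcal{A}}$ by the conjugate symmetry of the SRG, so SRG-fullness forces $A_\alpha \in \mathcal{A}$. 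Analogously $B_\beta \in \mathcal{B}$, and the sum $A_\alpha + B_\beta = (\alpha + \beta) I$ has $\alpha + \beta$ in its SRG.

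For the forward inclusion, fix $R = A + B$ with $A \in \mathcal{A}$, $B \in \mathcal{B}$, inputs $u_1 \neq u_2$, and set $\alpha = z_A(u_1, u_2)$, $\beta = z_B(u_1, u_2)$, $z = z_R(u_1, u_2)$, $\Delta u = u_1 - u_2$, $\Delta X = X u_1 - X u_2$. Linearity of $\Re \bra{\Delta u}\ket{\cdot}$ gives $\Re z = \Re \alpha + \Re \beta$ immediately. For the imaginary part, decompose $\Delta X = c_X \Delta u + X_\perp$ with $c_X \in \mathbb{C}$ and $X_\perp \perp \Delta u$; the two-component vectors
\begin{equation*}
        w_X \coloneqq \bigl(\Im c_X,\ X_\perp/\norm{\Delta u}\bigr) \in \mathbb{R} \oplus \Delta u^\perp
\end{equation*}
satisfy $\norm{w_X} = |\Im z_X|$ for $X \in \{A, B, R\}$, and $w_R = w_A + w_B$. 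The triangle inequality in $\mathbb{R} \oplus \Delta u^\perp$ then yields the key bound $|\Im z| \leq |\Im \alpha| + |\Im \beta|$.

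To assemble $\alpha' \in \srg{\mathcal{A}}$ and $\beta' \in \srg{\mathcal{B}}$ with $\alpha' + \beta' = z$, take $\Re \alpha' = \Re \alpha$ and $\Re \beta' = \Re \beta$, and use conjugate symmetry to reduce to $\Im z \geq 0$. Assuming (without loss of generality) that $\mathcal{A}$ has the chord property, the segment $[\alpha, \bar\alpha] \subseteq \srg{\mathcal{A}}$ allows $\Im \alpha'$ to be any value in $[-|\Im \alpha|, |\Im \alpha|]$; conjugate symmetry of $\srg{\mathcal{B}}$ allows $\Im \beta' \in \{-|\Im \beta|, |\Im \beta|\}$. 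Since the bound above places $\Im z$ in the reachable range $[-(|\Im \alpha| + |\Im \beta|), |\Im \alpha| + |\Im \beta|]$, a short case analysis (on whether $\Im z \geq |\Im \beta|$) produces admissible choices with $\Im \alpha' + \Im \beta' = \Im z$, finishing the argument.

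The main obstacle is establishing the triangle bound $|\Im z| \leq |\Im \alpha| + |\Im \beta|$. In a complex Hilbert space, $|\Im z_X|$ picks up contributions from both the imaginary part of $\bra{\Delta u}\ket{\Delta X}$ and from the orthogonal component $X_\perp$, so the naive real-Hilbert-space identity $|\Im z_X|^2 = \norm{X_\perp}^2/\norm{\Delta u}^2$ fails; packaging both contributions into $\mathbb{R} \oplus \Delta u^\perp$ is what reduces the question to a single Euclidean triangle inequality. The chord property is indispensable whenever $|\Im z|$ lies strictly between $\bigl||\Im \alpha| - |\Im \beta|\bigr|$ and $|\Im \alpha| + |\Im \beta|$, since restricting $\Im \alpha'$ to the two extreme values $\pm|\Im \alpha|$ would generically miss the target.
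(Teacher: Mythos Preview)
The paper does not supply its own proof of this proposition; it merely quotes Theorem~4.4 of Ryu, Hannah and Yin and refers the reader there. There is therefore nothing in the paper to compare your argument against directly.

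Your proof is essentially the argument given in the cited reference: realise points of the SRG by elementary operators to get $\srg{\mathcal{A}} + \srg{\mathcal{B}} \subseteq \srg{\mathcal{A}+\mathcal{B}}$ via SRG-fullness, and for the reverse inclusion show $\Re z = \Re\alpha + \Re\beta$ together with the triangle bound $|\Im z| \le |\Im\alpha| + |\Im\beta|$, then use the chord property to interpolate. The packaging of the imaginary part into the vector $w_X \in \mathbb{R}\oplus \Delta u^\perp$ is a clean way to handle the complex inner product, and the case analysis at the end is correct.

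One small point to tighten: your scalar operator $A_\alpha\colon v\mapsto \alpha v$ only makes sense when $\mathcal{H}$ is complex. The paper allows $L_2(\mathbb{R})$ as a signal space, and there the standard fix (used in the cited reference) is to take two orthonormal vectors $e_1,e_2\in\mathcal{H}$ and let $A_\alpha$ act on $\operatorname{span}\{e_1,e_2\}$ by the rotation--dilation matrix $\begin{pmatrix}\Re\alpha & -\Im\alpha\\ \Im\alpha & \Re\alpha\end{pmatrix}$; one then checks directly that its SRG is $\{\alpha,\bar\alpha\}$. With that adjustment your argument goes through in both the real and complex settings.
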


Propositions~\ref{prop:inversion} and \ref{prop:summation} allow a simple, geometric
proof of the following incremental form of the classical passivity theorem.

\begin{proposition}
        The negative feedback interconnection (Figure~\ref{fig:feedback}) of two incrementally positive systems $H_1$ and
        $H_2$, is incrementally positive.
\end{proposition}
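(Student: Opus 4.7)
The plan is to rewrite the closed-loop operator in a form whose SRG can be tracked through the interconnection laws of Propositions~\ref{prop:inversion} and~\ref{prop:summation}, and then appeal to Proposition~\ref{prop:inc_passive} to convert the resulting SRG containment into incremental positivity. Concretely, writing $y$ for the output of $H_1$ and $e$ for its input, the feedback equations $y = H_1 e$ and $e = u - H_2 y$ combine to $u = H_1^{-1} y + H_2 y = (H_1^{-1} + H_2)\,y$, so the map from $u$ to $y$ is $H_{cl} = (H_1^{-1} + H_2)^{-1}$, with $H_1^{-1}$ interpreted as the relational inverse.

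Starting from $\srg{H_1} \subseteq \C_{\Re \geq 0}$ and $\srg{H_2} \subseteq \C_{\Re \geq 0}$ (Proposition~\ref{prop:inc_passive}), I would trace $\C_{\Re \geq 0}$ through the three operations in turn. First, since the M\"obius inversion $re^{j\omega}\mapsto (1/r)e^{j\omega}$ preserves the argument, the right half plane is invariant under it; Proposition~\ref{prop:inversion} thus gives $\srg{H_1^{-1}} \subseteq \C_{\Re \geq 0}$. Next, Proposition~\ref{prop:summation} yields $\srg{H_1^{-1} + H_2} \subseteq \srg{H_1^{-1}} + \srg{H_2} \subseteq \C_{\Re \geq 0}$, using that $\C_{\Re \geq 0}$ is convex (in particular satisfies the chord property) and closed under pointwise addition. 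A final application of Proposition~\ref{prop:inversion} delivers $\srg{H_{cl}} \subseteq \C_{\Re \geq 0}$, whence Proposition~\ref{prop:inc_passive} concludes.

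The main obstacle is discharging the technical side-conditions of Proposition~\ref{prop:summation}, in particular the requirement that $\infty$ lie in neither $\srg{H_1^{-1}}$ nor $\srg{H_2}$. For $\srg{H_1^{-1}}$ this can fail unless $H_1$ is incrementally bounded below, so the natural remedy is a mild strict positivity or incremental coercivity assumption on $H_1$ (plus a boundedness hypothesis on $H_2$), which is the standard technical price paid by the classical passivity theorem. A secondary point is that Propositions~\ref{prop:inversion} and~\ref{prop:summation} are phrased for SRG-full classes, so strictly speaking the chain of containments should be run with $\mathcal{A}$ and $\mathcal{B}$ taken to be the SRG-full class of incrementally positive operators (with appropriate gain bounds) rather than the singletons $\{H_1\}, \{H_2\}$; since each singleton SRG sits inside the class SRG, the final containment $\srg{H_{cl}} \subseteq \C_{\Re \geq 0}$ still follows without any additional work.
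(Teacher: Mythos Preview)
Your argument is correct and matches the paper's proof essentially line for line: the paper also writes the closed loop as $(H_1^{-1}+H_2)^{-1}$, works at the level of the SRG-full class $\mathcal{M}$ of incrementally positive operators (exactly as you suggest in your final paragraph), and observes that $\C_{\Re\geq 0}$ is invariant under inversion and Minkowski self-sum. You are in fact more careful than the paper about the $\infty\notin\srg{\cdot}$ hypothesis of Proposition~\ref{prop:summation}, which the paper's proof silently ignores.
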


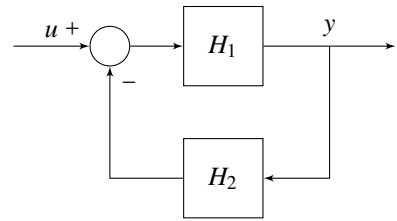
\begin{figure}[h]
        \centering
        \begin{tikzpicture}
                \bXInput{input}
                \bXComp*{sum}{input}
                \bXLink[$u$]{input}{sum}
                \bXBloc[2]{one}{$H_1$}{sum}
                \bXLink{sum}{one}
                \bXOutput[5]{out}{one}
                \bXLink[$y$]{one}{out}
                \bXBranchy{out}{return}
                \bXBlocr[5]{two}{$H_2$}{return}
                \bXLinkyx{one-out}{two}
                \bXLinkxy{two}{sum}
        \end{tikzpicture}
        \caption{Negative feedback interconnection of $H_1$ and $H_2$.}%
        \label{fig:feedback}
\end{figure}
\begin{proof}
        The negative feedback interconnection of $H_1$ and $H_2$ may be written as
\begin{IEEEeqnarray*}{rCl}
        y \in (H_1^{-1} + H_2)^{-1}(u).
\end{IEEEeqnarray*}
The proof by SRG shows that the sequence of operations that take $H_1$ to this
form leave the right half plane $\mathbb{C}_{\Re \geq 0}$
invariant.  In particular, note that SRG($\mathcal{M}$) $= \mathbb{C}_{\Re \geq 0}$, and
$\text{SRG}(H_1) \subseteq \text{SRG}(\mathcal{M})$.  Furthermore, $\mathcal{M}$ is
SRG-full and obeys the chord condition. Now,
\begin{enumerate}
        \item $\text{SRG}(\mathcal{M}^{-1}) = (\mathbb{C}_{\Re \geq 0})^{-1} = \mathbb{C}_{\Re \geq 0}$;
        \item $\text{SRG}(\mathcal{M}^{-1} + \mathcal{M}) = \mathbb{C}_{\Re \geq 0} +
                \mathbb{C}_{\Re \geq 0} = \mathbb{C}_{\Re \geq 0}$;
        \item $\text{SRG}(\mathcal{M}^{-1} + \mathcal{M})^{-1} = (\mathbb{C}_{\Re
                \geq 0})^{-1} = \mathbb{C}_{\Re \geq 0}$.\qedhere
\end{enumerate}
\end{proof}

\section{Conclusions}

This paper has presented preliminary results applying the scaled relative graph
of \textcite{Ryu2021} to system analysis.  The SRG is a generalization of the
classical Nyquist criterion which may be plotted for any operator on $L_2$, not only
those that are linear and time invariant.  This opens many opportunities to revisit
classical analysis and control design techniques in terms of the SRG, and extend
linear techniques to nonlinear operators.  We have made a preliminary step in this
direction by presenting a simple, graphical proof of the passivity theorem.

A particularly promising avenue is the
extension of LTI stability and robustness techniques to nonlinear operators via the SRG, beginning
with a generalization of the Nyquist criterion for stable operators. The SRG
allows the gain and phase margins to be calculated for nonlinear
operators. This approach to nonlinear system analysis is explored in the submitted journal
version of this paper \autocite{Chaffey2021c}.

\printbibliography

\end{document}